\let\phi=\varphi
\newcommand{\Ker}{\operatorname{Ker}}
\newcommand{\wt}{\operatorname{wt}}
\newcommand{\id}{\operatorname{id}}
\newcommand{\Gal}{\operatorname{Gal}}
\newcommand{\ord}{\operatorname{ord}}
\newcommand{\tr}{\operatorname{tr}}
\newcommand{\rk}{\operatorname{rk}}
\newcommand{\F}{\mathbb{F}}
\newcommand{\apl}[3]{#1\colon #2\longrightarrow #3}
\newcommand{\Z}{\mathbb{Z}}
\newcommand{\GG}{\mathcal{G}}
\newcommand{\HH}{\mathcal{H}}
\newcommand{\PP}{\mathcal{P}}
\newcommand{\TT}{\mathbb{T}}
\newtheorem{lemma}{Lemma}
\newtheorem{theorem}[lemma]{Theorem}
\newtheorem{definition}{Definition}
\newtheorem{corollary}[lemma]{Corollary}
\newenvironment{example}{{\bf Example.\/}}
\begin{document}

~\bigskip
\bigskip
\bigskip
\bigskip

\thispagestyle{empty}

 \centerline{\Large The $GC$-content of a family of cyclic codes with applications to DNA-codes}
\bigskip
\bigskip
\bigskip
\centerline{by}
\bigskip
\bigskip
\bigskip

\centerline{\bf Josu Sangroniz}
\centerline{Departamento de Matem\'aticas}
\centerline{Facultad de Ciencia y Tecnolog\'\i a}
\centerline{Universidad del Pa\'{\i}s Vasco} \centerline{48080
Bilbao.} \centerline{SPAIN} \centerline{E-mail:
josu.sangroniz@ehu.es}

\smallskip
\centerline{and}
\smallskip

\centerline{\bf Luis Mart\'\i nez }
\centerline{Departamento de Matem\'aticas}
\centerline{Facultad de Ciencia y Tecnolog\'\i a}
\centerline{Universidad del Pa\'{\i}s Vasco} \centerline{48080
Bilbao.} \centerline{SPAIN} \centerline{E-mail:
luis.martinez@ehu.es}

\bigskip
\bigskip
\bigskip
\bigskip
\centerline{\bf Abstract}
\bigskip
 \centerline{\parbox{13cm}{
 Given a prime power $q$ and a positive integer $r>1$ we say that a cyclic code of length $n$, $C\subseteq\F_{q^r}^n$, is \emph{Galois supplemented} if for any
 non-trivial element $\sigma$ in the Galois group of the extension $\F_{q^r}/\F_q$, $C+C^\sigma=\F_{q^r}^n$, where $C^\sigma=\{(x_1^\sigma,\dots,x_n^\sigma)\mid
 (x_1,\dots,x_n)\in C\}$. This family includes the quadratic-residue (QR) codes over $\F_{q^2}$. Some important properties QR-codes are then    extended to
 Galois supplemented codes and a new one is also considered,  which is actually the motivation for the introduction of this family of codes: in a Galois
 supplemented code we can explicitly count the number of words that have a fixed number of coordinates  in $\F_q$.
In connection with DNA-codes the number of coordinates of a word in $\F_4^n$ that lie in $\F_2$ is sometimes referred to as  the \emph{$GC$-content} of the word
and codes over $\F_4$ all of whose words have the same $GC$-content have a particular interest. Therefore our results have some direct applications in this
direction.}}

\bigskip
\bigskip
\bigskip
\bigskip
\bigskip
\bigskip
\bigskip
\bigskip

\newpage
\section{Introduction}

In this work we consider a finite extension of finite fields $\F_{q^r}/\F_q$, $r>1$, and study codes over $\F_{q^r}$. A case to keep always in mind is the
quaternary extension $\F_4/\F_2$ because of its important applications  in genome technologies, a context in which these codes  are usually referred to as
DNA-codes. In principle DNA strands of a fixed length can be thought of as codewords over the alphabet $A$, $C$, $G$ and $T$ representing the nucleotide bases,
but we can immediately translate these symbols into the elements of $\F_4$.

DNA-codes satisfying certain specific properties are of particular interest, notably it is desirable that their minimum Hamming distance is large as well as the
distance between any codeword and the reverse-complement of any  word in the code. For our purposes  the \emph{reverse-complement} of a word
$v=(x_1,x_2,\dots,x_n)\in\F_4^n$ is $v^{RC}=(x_n+1,\dots,x_2+1,x_1+1)$. Another   restriction that is relevant in DNA technologies is that all codewords have
the same $GC$-content. Originally this means that the number of nucleotide bases $G$ and $C$, the \emph{$GC$-content}, in all   codewords is the same although
we  reinterpret this condition by requiring that all codewords have the same number of coordinates in the base field $\F_2$ (of course we can do this by simply
identifying the nucleotides $G$ and $C$ with the elements in $\F_2$). If we denote by $d(u,v)$ the Hamming distance between  $u,v\in\F_4^n$, the maximum number
of words in a DNA-code $C\subseteq\F_4^n$ satisfying the three conditions: 1) $d(u,v)\geq d$ for all $u,v\in C$, $u\not=v$; 2) $d(u,v^{RC})\geq d$ for all
$u,v\in C$ and 3) all codewords have the same $GC$-content $w$, is denoted $A_4^{GC,RC}(n,d,w)$. Since the concrete value of $w$ is not so relevant, we consider
$A_4^{GC,RC}(n,d)=\max_w A_4^{GC,RC}(n,d,w)$. Lower bounds for this number and  some particular values of the parameters $n$ and $d$ can be found in
\cite{Aboluion}.

We shall show that for an easily describable family of cyclic codes over $\F_{q^r}$ it is possible to count the number of codewords with a given number of
coordinates  in $\F_q$. Instead of using the term $GC$-content, we'll prefer to call this number the \emph{$\F_q$-weight} of the word. As an immediate
consequence a general lower bound for $A_4^{GC,RC}(n,d)$ will follow, which in some cases  improves the bounds in \cite{Aboluion}. These codes include the
QR-codes (quadratic-residue codes) over $\F_{q^2}$.

The paper is organized as follows. In Section 2 we give a general formula to count the number of codewords of a cyclic code over $\F_{q^r}$ with a given
$\F_q$-weight $w$. This formula  depends only on the parameters $q$, $r$, $n$ and $w$ for a family of cyclic codes that is defined in Section 3, and that we
call \emph{Galois supplemented} codes. We can also find the corresponding formulas for the even weight subcodes and the extended codes of these cyclic codes. In
Section 4 we show how to construct Galois supplemented codes using cyclotomic cosets and it will become clear then that QR-codes over $\F_{q^2}$  are Galois
supplemented.  There is also some overlapping between quaternary Galois supplemented codes and the family of $Q$-codes defined by V.~Pless. These codes were
defined in \cite{Pless} in terms of their idempotent polynomials. The relation between the two families is clarified in Section 5, where we compute the
idempotent polynomials for some quaternary Galois supplemented codes. We also study in this section other general properties  of these codes (minimal odd
weight, automorphisms and duality). Finally, in the last section, some applications to DNA-codes are presented.

\section{Enumerating words with a fixed $\F_q$-weight}\label{enumerating}

\begin{definition} Let $u\in\F_{q^r}^n$. We call the \emph{$\F_q$-weight} of $u$, and we denote it $\wt_{\F_q}(u)$, to the number of coordinates of $u$ lying in
$\F_q$. If $C\subseteq\F_{q^r}^n$ is a code over $\F_{q^r}$ we define the  \emph{$\F_q$-weight enumerator polynomial} of $C$, $W_{\F_q,C}$, as
$$W_{\F_q,C}(X,Y)=\sum_{u\in C}X^{\wt_{\F_q}(u)}Y^{n-\wt_{\F_q}(u)}=\sum_{w\geq 0}b_wX^wY^{n-w},$$
where $b_w=b_w(C)$ is the number of codewords in $C$ of $\F_q$-weight $w$.
\end{definition}

Let $\sigma$ be the Frobenius automorphism of the extension   $\F_{q^r}/\F_q$ and $\psi$ the linear map (in the sequel  \emph{linear} means $\F_q$-linear)
$\psi=\sigma-\id$. These maps extend naturally to the polynomial algebra $\F_{q^r}[x]$ by acting on the coefficients. For $g\in\F_{q^r}[x]$, the image by $\psi$
is denoted as usual $\psi(g)$ but we we prefer the notation $g^\sigma$ for the image by $\sigma$. In this way $\sigma$ becomes an algebra automorphism and
$\psi$ a lineal map. Notice that the kernel of $\psi$ is $\F_q[x]$ and its image consists of the polynomials in  $\F_{q^r}[x]$ whose coefficients have zero
trace.

The maps $\sigma$ and $\psi$ fix setwise the ideal $(x^n-1)\subseteq \F_{q^r}[x]$, so they induce well-defined maps from the quotient algebra
$\F_{q^r}[x]/(x^n-1)$ to itself, that we still denote the same. Now the kernel of $\psi$ is the image of    $\F_q[x]$ in $\F_{q^r}[x]/(x^n-1)$ and it is a
subspace of dimension $n$.

If $f\in\F_{q^r}[x]$, the number of zero coefficients of  $\psi(f)$ is the same as the number of coefficients of $f$ in $\F_q$, so under the natural
identification of $\F_{q^r}^n$  and $\F_{q^r}[x]/(x^n-1)$, the $\F_q$-weight of $\overline f$ is $n-\wt (\psi(\overline f))$ ($\overline f$ is the image of $f$
in $\F_{q^r}[x]/(x^n-1)$ and  $\wt(u)$  the usual weight of   $u\in\F_{q^r}^n$).

For the rest of the paper   $n$ will be coprime with $q$ and $g\in\F_{q^r}[x]$ will be a divisor of $x^n-1\in\F_q[x]$. We'll denote by $C$ (or $C_g$ or even
$C_g^{\F_{q^r}}$) the cyclic code over $\F_{q^r}$ with generator polynomial $g$, that is $$C=(g)/(x^n-1)=\{\overline f\in \F_{q^r}
[x]/(x^n-1) \text{ such that } g\mid f\}.$$

Let's consider now $\psi_{|C}$. Its kernel is of course $C\cap\Ker\psi=\{\overline f\mid f\in\F_q[x],\, g\mid f\}$. If $\overline f$ lies in this set, we
certainly have that $g^{\sigma^i}\mid f$ for all $0\leq i<r$, so $[g,g^\sigma,\dots ,g^{\sigma^{r-1}}]\mid f$ (as usual,  brackets mean least common multiple
and parentheses greatest common divisor), so the kernel of  $\psi_{|C}$ consists of the elements $\overline f$ such that $f\in\F_q[x]$ is a multiple of
$[g,g^\sigma,\dots ,g^{\sigma^{r-1}}]$, a polynomial that lies in   $\F_q[x]$. Therefore $\Ker \psi_{|C}$ is naturally isomorphic (as  $\F_q$-vector space) to
$(\overline{[g,g^\sigma,\dots ,g^{\sigma^{r-1}}]})\subseteq \F_q[x]/(x^n-1)$, whose dimension is $n-\deg [g,g^\sigma,\dots , g^{\sigma^{r-1}}]$. The next result
follows now almost directly.

\begin{theorem}\label{main th} Let $C\subseteq \F_{q^r}^n$  be the cyclic code with generator polynomial $g$. Then $$W_{\F_q,C}=q^{n-\deg [g,g^\sigma,\dots ,
g^{\sigma^{r-1}}]}W_{\psi(C)}^R,$$
where $W_{\psi(C)}^R$ is the reciprocal polynomial of the weight enumerator polynomial of  $\psi(C)$ (the \emph{reciprocal polynomial} of a bivariate polynomial
$W(X,Y)$ is $W^R(X,Y)=W(Y,X)$; for a univariate polynomial $f$ of degree $n$, it is $f^R(x)=x^nf(1/x)$).
\end{theorem}
\begin{proof}
The elements in $C$ with $\F_q$-weight $w$ are exactly those whose image by $\psi$ has weight $n-w$, so $$b_w=|\Ker\psi_{|C}|a_{n-w}=q^{n-\deg [g,g^\sigma,\dots
, g^{\sigma^{r-1}}]}a_{n-w},$$
where $a_w=a_w(\psi(C))$ is the number of elements in  $\psi(C)$ of weight $w$. The result follows immediately.
\end{proof}

There are   two cases in which the code  $\psi(C)$ has a particularly simple description. The first is when $r=2$ and we deal with it in the next result. The
other case leads to the introduction of a new family of codes that is  to be  analyzed in the rest of the paper.

\begin{theorem} Let $r=2$, $\lambda\in\F_{q^2}$, $\lambda\not=0$, a fixed non-zero element of zero trace. Then $\psi(C)=\lambda J$, where $J$ is the image of
the cyclic code $C_{(g,g^\sigma)}^{\F_q}$ in $\F_{q^r}[x]/(x^n-1)$. Therefore
$$W_{\F_q,C}=q^{n-\deg [g,g^\sigma]}W_{C_{(g,g^\sigma)}^{\F_q}}^R.$$
\end{theorem}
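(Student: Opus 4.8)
The plan is to identify $\psi(C)$ explicitly and then read off the enumerator formula from Theorem~\ref{main th}. Since $r=2$, the trace is $\tr(u)=u+u^\sigma$, and the image of $\psi=\sigma-\id$ on $\F_{q^2}$ is the line of trace-zero elements, which is exactly $\F_q\lambda$ (it is $1$-dimensional because $\Ker\psi=\F_q$, and $\lambda$ is a nonzero trace-zero element). Consequently the image of $\psi$ on $\F_{q^2}[x]/(x^n-1)$ consists of the classes $\lambda\overline h$ with $h\in\F_q[x]$, as recorded before Theorem~\ref{main th}. Thus the whole statement reduces to proving the single identity $\psi(C)=\lambda J$, after which the formula follows: multiplying every coordinate by the nonzero scalar $\lambda$ is a weight-preserving bijection, so $W_{\psi(C)}=W_{C_{(g,g^\sigma)}^{\F_q}}$, and substituting into Theorem~\ref{main th} (with $[g,g^\sigma,\dots,g^{\sigma^{r-1}}]=[g,g^\sigma]$) yields the claim.

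First I would establish the inclusion $\psi(C)\subseteq\lambda J$. Write $d=(g,g^\sigma)$, a monic divisor of $x^n-1$ lying in $\F_q[x]$ (it is $\sigma$-invariant since $\sigma^2=\id$). Given $\overline f\in C$ with $g\mid f$, we have $\psi(\overline f)=\overline{f^\sigma-f}$, and $d$ divides both $f$ (through $g$) and $f^\sigma$ (through $g^\sigma$), hence divides $f^\sigma-f$ in $\F_{q^2}[x]$. As the coefficients of $f^\sigma-f$ have zero trace, we may write $f^\sigma-f=\lambda h$ with $h\in\F_q[x]$; since $d\in\F_q[x]$ divides $\lambda h$ we get $d\mid h$, so $\overline{f^\sigma-f}=\lambda\overline h\in\lambda J$.

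It then remains to prove equality, and here I would avoid the (doable but fiddly) reverse inclusion and instead compare $\F_q$-dimensions. On one side, $\dim_{\F_q}\lambda J=\dim_{\F_q}C_{(g,g^\sigma)}^{\F_q}=n-\deg d$. On the other, using the kernel computation carried out just before Theorem~\ref{main th} (which gives $\dim_{\F_q}\Ker\psi_{|C}=n-\deg[g,g^\sigma]$) together with $\dim_{\F_q}C=2(n-\deg g)$, the rank--nullity theorem yields $\dim_{\F_q}\psi(C)=n-2\deg g+\deg[g,g^\sigma]$. The two match once we invoke the elementary identity $\deg[g,g^\sigma]+\deg d=\deg g+\deg g^\sigma=2\deg g$, which turns $n-2\deg g+\deg[g,g^\sigma]$ into $n-\deg d$. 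Hence $\psi(C)=\lambda J$.

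The one genuine obstacle is precisely the equality $\psi(C)=\lambda J$, and the dimension count is its delicate part: one must keep the $\F_q$- versus $\F_{q^2}$-dimensions straight (the factor $2$ in $\dim_{\F_q}C$) and correctly import both the kernel computation preceding Theorem~\ref{main th} and the gcd--lcm degree identity. The characteristic-two case deserves a remark---there $\F_q\lambda=\F_q$ and one may simply take $\lambda=1$---but the argument above is uniform and requires no separate treatment.
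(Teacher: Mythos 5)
Your proposal is correct and follows essentially the same route as the paper: establish the inclusion $\psi(C)\subseteq\lambda J$ using that $(g,g^\sigma)\in\F_q[x]$ divides $f^\sigma-f$ and that the trace-zero line is $\F_q\lambda$, then conclude equality by the rank--nullity dimension count $\dim_{\F_q}\psi(C)=2(n-\deg g)-(n-\deg[g,g^\sigma])=n-\deg(g,g^\sigma)=\dim_{\F_q}\lambda J$. The only difference is expository (you spell out the divisibility step and the gcd--lcm degree identity that the paper leaves implicit), so no further comment is needed.
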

\begin{proof}

As $r=2$,
$$\dim \psi(C)=\dim C-\dim\Ker\psi_{|C}=2(n-\deg g)-(n-\deg [g,g^\sigma])=n-\deg (g,g^\sigma).$$

On the other hand, the zero-trace elements in $\F_{q^2}$ form a $1$-dimensional $\F_q$-subspace, so any element with zero trace is one of $\alpha\lambda$ for
some $\alpha\in\F_q$ and any polynomial in the image of $\psi$ can be written as $\lambda f_1$ with $f_1\in\F_q[x]$. It is also clear that for any $\overline f$
in $\psi(C)$, the polynomial $(g,g^\sigma)$ divides $f$. But $(g,g^\sigma)$ has coefficients in $\F_q$ (given that
$(g,g^\sigma)^\sigma=(g^\sigma,g^{\sigma^2})=(g^\sigma,g)$), so
 $$\psi(C)\subseteq \{\lambda\overline f_1\mid f_1\in ((g,g^\sigma))\subseteq \F_q[x]\}=\lambda J.$$
Now, the inclusion of  $\F_q[x]$ in $\F_{q^r}[x]$ induces an injective map between the quotients by the corresponding ideals generated by  $x^n-1$, so $J$ has
the same   dimension   as the   cyclic code over  $\F_q$ generated by $(g,g^\sigma)$, namely $n-\deg (g,g^\sigma)$. We conclude that both $\psi(C)$ and $\lambda
J$ have the same dimension, whence they actually coincide.
\end{proof}

\section{$\F_q$-weights in Galois supplemented   and related codes}

\begin{definition} Let $C\subseteq\F_{q^r}^n$ be a cyclic code of length $n$ over $\F_{q^r}$. We say that $C$ is \emph{Galois supplemented} if
$C+C^\sigma=\F_{q^r}^n$ for any non-trivial automorphism $\sigma\in\Gal(\F_{q^r}/\F_q)$, the Galois group of the extension $\F_{q^r}/\F_q$,  where
$C^\sigma=\{(x_1^\sigma,\dots,x_n^\sigma)\mid (x_1,\dots,x_n)\in C\}$.
\end{definition}

If $C\subseteq\F_{q^r}^n$ is a cyclic code with generator polynomial $g\in\F_{q^r}[x]$ and $\sigma\in\Gal(\F_{q^r}/\F_q)$, it is clear that $C^\sigma$ is the
cyclic code with generator polynomial $g^\sigma$, so the two conditions $\F_{q^r}^n=C+C^\sigma$ and $(g,g^\sigma)=1$ are equivalent. This motivates the
following definition.

\begin{definition}
Let $g\in\F_{q^r}[x]$. We say that $g$ is \emph{Galois coprime} if $(g,g^\sigma)=1$ for any non-trivial $\sigma\in\Gal (\F_{q^r}/\F_q)$.
\end{definition}

It is clear now that a cyclic code $C\subseteq\F_{q^r}^n$ is Galois supplemented if and only if its generator polynomial is Galois coprime. In the sequel $C$
will always denote a cyclic code with generator polynomial $g$. Notice that if $C$ is Galois supplemented,  the polynomials $g,g^\sigma,\dots,g^{\sigma^{r-1}}$
are pairwise coprime (now $\sigma$ denotes the Frobenius automorphism of $\F_{q^r}/\F_q$), so $[g,g^\sigma,\dots,g^{\sigma^{r-1}}]=gg^\sigma\dots
g^{\sigma^{r-1}}$, whence
$$\dim \psi(C)=r(n-\deg g)-(n-r\deg g)=(r-1)n.$$
But $\psi(C)$ is clearly contained in the space formed by  the elements $\overline f\in \F_{q^r}[x]/(x^n-1)$, where $f$ is a polynomial with degree less than
$n$ and zero-trace coefficients. The dimension of this space is precisely $(r-1)n$, so it coincides with $\psi(C)$. Now it is clear that the number of elements
in $\psi(C)$ of weight $w$, $0\leq w\leq n$, is  ${n\choose w}(q^{r-1}-1)^w$ (because in $\F_{q^r}$ there are exactly $q^{r-1}-1$ non-zero elements with zero
trace). The next result follows immediately from Theorem \ref{main th}.

\begin{theorem}\label{principal} Let $C\subseteq\F_{q^r}^n$ be a Galois supplemented   code with generator polynomial $g$. Then
$$W_{\F_q,C}(X,Y)=q^{n-r\deg g}(X+(q^{r-1}-1)Y)^n.$$
\end{theorem}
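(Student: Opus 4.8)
The plan is to substitute an explicit description of $\psi(C)$ into Theorem~\ref{main th}, so that the whole argument reduces to two bookkeeping tasks: identifying the scalar prefactor, and computing the weight enumerator of $\psi(C)$ together with its reciprocal.

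First I would dispose of the prefactor. Since $C$ is Galois supplemented its generator $g$ is Galois coprime, so the conjugates $g,g^\sigma,\dots,g^{\sigma^{r-1}}$ are pairwise coprime: applying powers of $\sigma$ to the relations $(g,g^{\sigma^k})=1$ yields $(g^{\sigma^i},g^{\sigma^j})=1$ for $i\ne j$. Hence $[g,g^\sigma,\dots,g^{\sigma^{r-1}}]=gg^\sigma\cdots g^{\sigma^{r-1}}$, a polynomial of degree $r\deg g$ because each conjugate has the same degree as $g$. Consequently the factor $q^{\,n-\deg[g,g^\sigma,\dots,g^{\sigma^{r-1}}]}$ appearing in Theorem~\ref{main th} is exactly $q^{\,n-r\deg g}$.

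Next I would pin down $\psi(C)$. By rank-nullity over $\F_q$, and using the kernel description recorded just before the statement, $\dim_{\F_q}\psi(C)=\dim_{\F_q}C-\dim_{\F_q}\Ker\psi_{|C}=r(n-\deg g)-(n-r\deg g)=(r-1)n$. On the other hand $\psi(C)$ is visibly contained in the space $Z$ of those $\overline f$ all of whose coordinates are zero-trace elements of $\F_{q^r}$; since a single coordinate ranges over an $\F_q$-subspace of dimension $r-1$, one has $\dim_{\F_q}Z=(r-1)n$ as well, and equality of dimensions forces $\psi(C)=Z$. The weight enumerator of $Z$ is then a routine count: there are $q^{r-1}-1$ nonzero zero-trace elements, so a word of weight $w$ is built by choosing its $w$ nonzero coordinates and filling each, giving
$$W_{\psi(C)}(X,Y)=\sum_{w=0}^n\binom{n}{w}(q^{r-1}-1)^w X^wY^{n-w}=\bigl(Y+(q^{r-1}-1)X\bigr)^n.$$

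Finally I would form the reciprocal by swapping $X$ and $Y$, obtaining $W_{\psi(C)}^R(X,Y)=\bigl(X+(q^{r-1}-1)Y\bigr)^n$, and feed it together with the prefactor into Theorem~\ref{main th} to reach the claimed identity. There is no genuine obstacle here: the real content was already secured in the identification $\psi(C)=Z$, which hinges on the pairwise coprimality of the conjugates. The only step demanding care is the reciprocal bookkeeping, so that the factor $(q^{r-1}-1)$ attaches to $Y$ rather than $X$ in the final formula; getting that swap backwards would silently interchange the counts of in-$\F_q$ and out-of-$\F_q$ coordinates.
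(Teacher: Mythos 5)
Your proposal is correct and follows the same route as the paper: pairwise coprimality of the conjugates gives the prefactor, the dimension count identifies $\psi(C)$ with the full zero-trace space, and the binomial count of its weight enumerator is fed into Theorem~\ref{main th}. The only cosmetic difference is that the paper folds this argument into the paragraph preceding the theorem rather than a displayed proof.
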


Alternatively, Theorem \ref{principal} also follows easily from another    characterization of Galois supplemented codes: they are  the cyclic codes $C\subseteq
\F_{q^r}^n$ satisfying $C+\F_q^n=\F_{q^r}^n$. (Indeed, if $g$ is the generator of $C$, $C\cap \F_q^n$ is the cyclic code over $\F_q$ with generator polynomial
$[g,g^\sigma,\dots,g^{\sigma^{r-1}}]$. By comparing the dimensions of $C+\F_q^n$ and $\F_{q^r}^n$ one gets that they are equal if and only if $\deg
[g,g^\sigma,\dots,g^{\sigma^{r-1}}]=r\deg g$, that is $g$ is Galois coprime.)

Another interesting approach that we should mention is using MacWilliams theorem for the complete weight enumerator  polynomial (see \cite[Chapter 5, Theorem
10]{MacWilliamsSloane}), as is done in \cite[Theorem 1]{GaboritKing} for the extension $\F_4/\F_2$. To state the corresponding $q^r$-ary version we need to
introduce one more element. In $\F_{q^r}^n$ we have the usual dot product $\cdot$ and we can define  a new one $\circ$ by setting $u\circ v=f(u\cdot v)$, where
$\apl f {\F_{q^r}}{\F_p}$ is any non-trivial linear form on the $\F_p$-vector space $\F_{q^r}$ ($p$ is of course the characteristic of $\F_{q^r}$). Now, $\circ$
is a non-degenerate $\F_p$-bilinear form on $\F_{q^r}^n$. We'll use $\perp_\circ$ to denote orthogonality with respect to this form and $\perp$ for the usual
orthogonality with respect to $\cdot$. In fact one has $W^\perp=W^{\perp_\circ}$ for any $\F_{q^r}$-subspace $W\subseteq\F_{q^r}^n$. MacWilliams Theorem yields
then that for any linear code $C\subseteq\F_{q^r}^n$,
$$W_{\F_q,C^\perp}(X,Y)=\frac{q^n}{|C|}W_{C\cap (\F_q^n)^{\perp_\circ}}(X+(q^{r-1}-1)Y,\,X-Y).$$
If $g(x)h(x)=x^n-1$, when we apply this formula for the cyclic code $C=C_{h^R}$ we get
\begin{equation}\label{MacWilliams}
W_{\F_q,C_g}(X,Y)=q^{n-r\deg g}W_{(C_g + \F_q^n)^{\perp_\circ}}(X+(q^{r-1}-1)Y,\,X-Y).\end{equation}
(Notice that $C_{h^R}=C_g^\perp=C_g^{\perp_\circ}$.) Now Theorem \ref{principal} is clear since, for a Galois coprime polynomial $g$,
$(C_g+\F_q^n)^{\perp_\circ}=(\F_{q^r}^n)^{\perp_\circ}=\{0\}$. In fact (\ref{MacWilliams}) shows that $g$ is Galois coprime if and only if the $\F_q$-weight
enumerator polynomial of $C_g$ is as in Theorem \ref{principal}.

Next we would like to compute the $\F_q$-weight enumerator polynomial for
 the even weight subcode and for the extended code of a Galois supplemented code.  Recall that  the even weight subcode $C^+$ of a code $C$ is
 $C^+=\{(x_1,\dots,x_n)\in C\mid x_1+\cdots +x_n=0\}$. If $C$ is a cyclic code with generator polynomial $g$, this is a proper subcode if and only if
 $g(1)\not=0$ and in this case it is in fact the cyclic code with generator polynomial $g_1(x)=(x-1)g(x)$. So we can consider the case when $C$ is a Galois
 supplemented code with generator polynomial $g$ or, even more generally, the cyclic subcode of $C$ with generator polynomial $g_0g$,  $C_{g_{\raise -.5pt
 \hbox{$\scriptscriptstyle 0$}}g}$, where $g_0\in\F_q[x]$ is any divisor of $x^n-1$. With the notation in Section \ref{enumerating} it is clear that
\begin{equation}\label{pluscode} \psi(C_{g_{\raise -.5pt  \hbox{$\scriptscriptstyle 0$}}g}) \subseteq \{\overline f\in \F_{q^r}[x]/(x^n-1)\mid
\tr_{\F_{q^r}/\F_q}f=0\text{ and } g_0\mid f\},\end{equation}
where, $\tr_{\F_{q^r}/\F_q}f=0$   means that all the coefficients of $f$ have zero trace. The dimension of the subspace in the right-hand side of
(\ref{pluscode}) is   $(r-1)(n-\deg g_0)$ and the dimension of $\psi(C_{g_{\raise -.5pt  \hbox{$\scriptscriptstyle 0$}}g})$ is
\begin{eqnarray*}
\dim C_{g_{\raise -.5pt  \hbox{$\scriptscriptstyle 0$}}g} -\dim \Ker\psi_{|C_{{g_{\raise -.5pt  \hbox{$\scriptscriptstyle 0$}}}g}}&=&r(n-\deg g-\deg
g_0)-(n-\deg [g_0g,\dots ,g_0g^{\sigma^{r-1}}])\\&=&
r(n-\deg g-\deg g_0)-(n-r\deg g-\deg g_0)=(r-1)(n-\deg g_0),\end{eqnarray*}
so the inclusion (\ref{pluscode}) is in fact an equality.

Next we want to count how many elements in $\psi(C_{g_{\raise -.5pt  \hbox{$\scriptscriptstyle 0$}}g})$ have weight $N=n-w$. Since the polynomial $g_0$ has
coefficients in $\F_q$, the condition that $g_0\mid f$ can be reformulated in terms of the coefficients of $f$ being annihilated by a certain matrix over
$\F_q$, thus we want to count the number of words of weight $N$ in a set of the form
$$\{(x_1,\dots ,x_n)\in\F_{q^r}^n\mid \tr_{\F_{q^r}/\F_q}(x_i)=0,\ i=1,\dots ,n,\ (x_1,\dots ,x_n)H=0 \},$$
where $H$ is a matrix over $\F_q$. More generally, suppose $\TT$ is a finitely generated $\F_q$-vector space contained in any field extension $\F$ of $\F_q$ and
consider the set
$\{x\in\TT ^n \mid  xH=0 \}$. The nullspace $W\subseteq\F^n$ of $H$ has an $\F$-basis $v_1,\dots ,v_s$, $s=n-\rk H$, with $v_i\in\F_q^n$ and it is clear that
$\lambda_1v_1+\cdots +\lambda_sv_s\in\TT^n$ if and only if $\lambda_i\in\TT$ for all $1\leq i\leq s$, so the number of elements in $\TT^n$ annihilated by $H$ is
$|\TT|^s$, which, for a fixed $H$, only depends on the size of $\TT$. Now we can use the inclusion-exclusion principle to count how many of these elements have
weight $N$. If we fix the set $S\subseteq\{1,\dots,n\}$ of the positions of the non-zero coordinates, this number is
$$\sum_{T\subseteq S} (-1)^{|T|}|\TT|^{N-|T|-\rk H_{S\backslash T}}=\sum_{U\subseteq S}
(-1)^{N-|U|}|\TT|^{|U|-\rk H_U},$$
where $H_U$ is the submatrix of $H$ consisting of the rows indexed by the elements in $U$. Again this number only depends on the size of $\TT$. Therefore, the
weight enumerator polynomial of $\{x\in\TT ^n \mid  xH=0 \}$ is
\begin{equation}\label{polynomial}
\sum_{T\cap U=\emptyset}
(-1)^{|T|}|\TT|^{|U|-\rk H_U} X^{|T|+|U|}Y^{n-|T|-|U|}=
\sum_{U\subseteq\{1,\dots ,n\}} |\TT|^{|U|-\rk H_U}X^{|U|}(Y-X)^{n-|U|}.\end{equation}
This will be in particular the weight enumerator polynomial of the code over $\F_{q^d}$ with control matrix $H$, where $d$ is the dimension of $\TT$. In our
case the set $\TT$ consists of the elements in $\F_{q^r}$ with zero trace, which has dimension $r-1$, so the following theorem is now a consequence of the
preceding discussion and Theorem \ref{main th}.

\begin{theorem} Suppose that $g\in \F_{q^r}[x]$ is Galois coprime, $g_0\in\F_q[x]$ and $g_0g\mid x^n-1$. Then
$$W_{\F_q,C_{g_{\raise -.5pt  \hbox{$\scriptscriptstyle 0$}}g}}=q^{n-r\deg g-\deg g_0}(W_{{C_{g_{\raise -.5pt  \hbox{$\scriptscriptstyle 0$}}}}^{\kern
-4pt{\F_{\kern -1pt  q^{r-1}}}}})^R.$$
\end{theorem}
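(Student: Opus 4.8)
The plan is to combine Theorem~\ref{main th} with the explicit weight-enumerator formula (\ref{polynomial}) that has just been derived for sets of the form $\{x\in\TT^n\mid xH=0\}$. By Theorem~\ref{main th} we know
$$W_{\F_q,C_{g_0g}}=q^{\,n-\deg[g_0g,(g_0g)^\sigma,\dots,(g_0g)^{\sigma^{r-1}}]}\,W_{\psi(C_{g_0g})}^R,$$
so the first task is to identify $\psi(C_{g_0g})$ precisely and to pin down the exponent of $q$. Since $g_0\in\F_q[x]$ is fixed by $\sigma$, the least common multiple $[g_0g,(g_0g)^\sigma,\dots]$ equals $g_0[g,g^\sigma,\dots,g^{\sigma^{r-1}}]$, and because $g$ is Galois coprime this is $g_0\,g\,g^\sigma\cdots g^{\sigma^{r-1}}$, of degree $r\deg g+\deg g_0$. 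Thus the prefactor is exactly $q^{\,n-r\deg g-\deg g_0}$, which matches the claimed formula.

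Next I would use the dimension count already carried out in the text, which shows that the inclusion (\ref{pluscode}) is an equality: $\psi(C_{g_0g})$ is exactly the set of $\overline f\in\F_{q^r}[x]/(x^n-1)$ whose coefficients have zero trace and which are divisible by $g_0$. First I would translate the divisibility condition $g_0\mid f$ into a linear condition $xH=0$ for a suitable parity-check matrix $H$ with entries in $\F_q$ (this is legitimate precisely because $g_0$ has coefficients in $\F_q$), and observe that the rank of $H$ is $\deg g_0$, so that the $\F_q$-nullspace of $H$ is the cyclic code $C_{g_0}^{\F_q}$ of dimension $n-\deg g_0$. With $\TT$ taken to be the zero-trace subspace of $\F_{q^r}$, which has $\F_q$-dimension $r-1$, the discussion preceding the theorem identifies $\psi(C_{g_0g})$ as the set $\{x\in\TT^n\mid xH=0\}$, and formula (\ref{polynomial}) gives its weight enumerator. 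The key point flagged in the text is that this enumerator depends on $\TT$ only through its cardinality $|\TT|=q^{r-1}$; hence it coincides with the weight enumerator of the code over $\F_{q^{r-1}}$ having the same control matrix $H$, namely $C_{g_0}^{\F_{q^{r-1}}}$.

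Putting these pieces together, $W_{\psi(C_{g_0g})}=W_{C_{g_0}^{\F_{q^{r-1}}}}$, and substituting into the formula from Theorem~\ref{main th} yields
$$W_{\F_q,C_{g_0g}}=q^{\,n-r\deg g-\deg g_0}\bigl(W_{C_{g_0}^{\F_{q^{r-1}}}}\bigr)^R,$$
which is exactly the asserted identity. The main obstacle I anticipate is not any single hard estimate but rather the bookkeeping needed to make the cardinality-invariance argument rigorous: one must check carefully that the nullspace $W\subseteq\F^n$ of $H$ admits a basis $v_1,\dots,v_s$ with all $v_i\in\F_q^n$ (so that $\lambda_1v_1+\cdots+\lambda_sv_s\in\TT^n\iff\lambda_i\in\TT$ for all $i$), and that the inclusion–exclusion count of words of each weight genuinely factors through $|\TT|$ alone. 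This is precisely the content of the paragraph preceding the theorem, so the theorem follows \emph{almost directly} once that reduction is invoked; the remaining work is the degree computation for the least common multiple, which is straightforward given that $g$ is Galois coprime and $g_0$ is $\sigma$-invariant.
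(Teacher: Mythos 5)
Your proposal is correct and follows essentially the same route as the paper, which likewise derives the theorem by combining Theorem~\ref{main th} (with the degree computation $\deg[g_0g,\dots,g_0g^{\sigma^{r-1}}]=r\deg g+\deg g_0$) with the dimension-count identification of $\psi(C_{g_0g})$ as the zero-trace polynomials divisible by $g_0$ and the cardinality-only dependence of the enumerator in (\ref{polynomial}). Nothing essential is missing.
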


Of course, when we set $g_0=1$, this theorem is consistent with Theorem \ref{principal} as $W_{\F_{q^{r-1}}^n}(X,Y)=((q^{r-1}-1)X+Y)^n$. The weight enumerator
polynomial for the zero-parity  code is also known (see \cite[Example 1.45]{Xambo} or, alternatively, compute it by using (\ref{polynomial})), so we have:

\begin{corollary}\label{enumeratorplus} Let $C\subseteq \F_{q^r}^n$ be a Galois supplemented code with generator polynomial $g$ and $C^+$ the even weight
subcode. Then
$$W_{\F_q,C^+}(X,Y)=q^{n-r(\deg g+1)}((X+(q^{r-1}-1)Y)^n+(q^{r-1}-1)(X-Y)^n).$$
\end{corollary}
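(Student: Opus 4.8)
The plan is to realize $C^+$ as a cyclic code of the form $C_{g_0g}$ and then invoke the theorem immediately preceding this corollary with a suitable choice of $g_0$. First I would check that $g(1)\neq 0$. Since $g$ is Galois coprime and $r>1$, the Frobenius $\sigma$ is a non-trivial element of $\Gal(\F_{q^r}/\F_q)$; if $(x-1)\mid g$ then, because $1^\sigma=1$ forces $g^\sigma(1)=g(1)^\sigma=0$, we would also have $(x-1)\mid g^\sigma$, contradicting $(g,g^\sigma)=1$. Hence $x-1$ and $g$ are coprime, and as $n$ is prime to $q$ the polynomial $x^n-1$ is separable, so $(x-1)g\mid x^n-1$. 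Therefore the even weight subcode is proper and equals the cyclic code with generator polynomial $(x-1)g$, i.e.\ $C^+=C_{g_0g}$ with $g_0=x-1\in\F_q[x]$.

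Next I would apply the previous theorem with this $g_0$. As $\deg g_0=1$, it yields
\[
W_{\F_q,C^+}=q^{\,n-r\deg g-1}\bigl(W_{C_{x-1}^{\F_{q^{r-1}}}}\bigr)^R,
\]
where $C_{x-1}^{\F_{q^{r-1}}}$ is the cyclic code over $\F_{q^{r-1}}$ generated by $x-1$, that is, the zero-parity code $\{(c_0,\dots,c_{n-1}):\sum_i c_i=0\}$ of length $n$ over a field of size $Q=q^{r-1}$.

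The remaining step is to substitute the known weight enumerator of this zero-parity code, namely
\[
W_{C_{x-1}^{\F_{q^{r-1}}}}(X,Y)=\tfrac{1}{q^{r-1}}\bigl((Y+(q^{r-1}-1)X)^n+(q^{r-1}-1)(Y-X)^n\bigr),
\]
and to pass to the reciprocal polynomial, which simply interchanges $X$ and $Y$. Absorbing the prefactor $1/q^{r-1}$ into $q^{\,n-r\deg g-1}$ gives $q^{\,n-r\deg g-1-(r-1)}=q^{\,n-r(\deg g+1)}$, and one reads off exactly
\[
W_{\F_q,C^+}(X,Y)=q^{\,n-r(\deg g+1)}\bigl((X+(q^{r-1}-1)Y)^n+(q^{r-1}-1)(X-Y)^n\bigr),
\]
as claimed.

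I expect no genuine obstacle: the statement is essentially the specialization $g_0=x-1$ of the preceding theorem. The only points demanding care are the a priori verification that $g(1)\neq 0$, which is what guarantees that $C^+$ is a proper subcode equal to $C_{(x-1)g}$ rather than $C$ itself, and the bookkeeping of the $q$-powers together with the $X\leftrightarrow Y$ swap when forming the reciprocal; both are routine.
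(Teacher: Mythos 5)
Your proof is correct and follows essentially the same route as the paper: the corollary is obtained by specializing the preceding theorem to $g_0=x-1$ and substituting the known weight enumerator of the zero-parity code over $\F_{q^{r-1}}$, with the power of $q$ bookkeeping working out exactly as you describe. Your preliminary verification that $g(1)\neq 0$ (so that $C^+=C_{(x-1)g}$ is a proper cyclic subcode) is a point the paper records earlier, at the start of Section 5, rather than inside this proof.
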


The following construction can be regarded as a kind of   dual to the passing from $C_g$ to $C_{g_{\raise -.5pt  \hbox{$\scriptscriptstyle 0$}}g}$. We start
with a polynomial $g_0\in\F_q[x]$ dividing $x^m-1$, where $m>n$ and $\deg g_0=m-n$. Then for any $f\in\F_{q^r}[x]$ of degree   $<n$ there exists a unique
polynomial $r\in\F_{q^r}[x]$ of degree $<m-n$ such that $g_0$ divides $f(x)+r(x)x^n$. This gives an embedding $C_g\hookrightarrow C_{g_0}$ whose image
$\widetilde C_g$ will be called the \emph{extended} code of $C_g$ by $g_0$. Of course, when $g_0(x)=x-1$ this construction gives the usual extended code
obtained by adding the parity check coordinate. We have the same inclusion as in (\ref{pluscode}) with $C_{g_{\raise -.5pt  \hbox{$\scriptscriptstyle 0$}}g}$
and $n$ replaced by $\widetilde C_g$ and $m$, respectively, which is in fact an equality, being the dimensions of the two spaces the same, namely
$(r-1)n=(m-\deg g_0)(r-1)$. Now we can repeat verbatim the same argument as before to conclude the following result.

\begin{theorem} Suppose that $g\in \F_{q^r}[x]$ is Galois coprime and $g_0\in\F_q[x]$ with $g\mid x^n-1$, $g_0\mid x^m-1$, $\deg g_0=m-n$. Then
$$W_{\F_q,\widetilde C_g}=q^{m-r\deg g-\deg g_0}(W_{{C_{g_{\raise -.5pt  \hbox{$\scriptscriptstyle 0$}}}}^{\kern -4pt{\F_{\kern -1pt  q^{r-1}}}}})^R.$$
\end{theorem}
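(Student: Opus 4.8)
The plan is to mirror, almost line by line, the proof of the preceding theorem, the only new feature being that $\widetilde C_g$ lives in $\F_{q^r}[x]/(x^m-1)$ rather than being a cyclic code of length $n$. The first point I would stress is that although $\widetilde C_g$ need not be cyclic, it is the image of the $\F_{q^r}$-linear embedding $C_g\hookrightarrow C_{g_0}$, hence an $\F_{q^r}$-subspace and in particular an $\F_q$-subspace. The counting identity extracted from the proof of Theorem \ref{main th}, namely $b_w=|\Ker\psi_{|\widetilde C_g}|\,a_{m-w}$ with $a_N$ the number of weight-$N$ words of $\psi(\widetilde C_g)$, continues to hold verbatim, since its justification uses only that $\psi$ is $\F_q$-linear and that $\wt_{\F_q}(\overline f)=m-\wt(\psi(\overline f))$ for every $\overline f\in\F_{q^r}[x]/(x^m-1)$. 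Thus the whole problem reduces to computing (a) the size of $\Ker\psi_{|\widetilde C_g}$ and (b) the weight distribution of $\psi(\widetilde C_g)$.

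For (a) I would simply invoke rank--nullity. As the embedding is an $\F_{q^r}$-isomorphism onto its image, $\dim\widetilde C_g=\dim C_g=r(n-\deg g)$, while the discussion introducing $\widetilde C_g$ already records $\dim\psi(\widetilde C_g)=(r-1)n$. Hence $\dim\Ker\psi_{|\widetilde C_g}=r(n-\deg g)-(r-1)n=n-r\deg g$, and since $\deg g_0=m-n$ this is $m-r\deg g-\deg g_0$. Consequently $|\Ker\psi_{|\widetilde C_g}|=q^{m-r\deg g-\deg g_0}$, which is exactly the leading constant in the claimed formula.

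For (b) I would use the equality, already secured by the dimension comparison in that same paragraph, that
$$\psi(\widetilde C_g)=\{\overline f\in\F_{q^r}[x]/(x^m-1)\mid \tr_{\F_{q^r}/\F_q}f=0\text{ and }g_0\mid f\}.$$
This set is precisely of the form $\{x\in\TT^m\mid xH=0\}$, where $\TT$ is the $(r-1)$-dimensional $\F_q$-space of zero-trace elements of $\F_{q^r}$ and $H$ is the $\F_q$-matrix encoding the divisibility $g_0\mid f$. By formula (\ref{polynomial}) its weight enumerator depends only on $|\TT|=q^{r-1}$, so it coincides with the weight enumerator of the length-$m$ code over $\F_{q^{r-1}}$ with the same control matrix, namely $C_{g_0}^{\F_{q^{r-1}}}$. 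Therefore $a_N$ is the coefficient of $X^NY^{m-N}$ in $W_{C_{g_0}^{\F_{q^{r-1}}}}$.

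Finally I would assemble the pieces: substituting into $b_w=q^{m-r\deg g-\deg g_0}a_{m-w}$ and summing over $w$ gives
$$W_{\F_q,\widetilde C_g}(X,Y)=q^{m-r\deg g-\deg g_0}\sum_w a_{m-w}X^wY^{m-w}=q^{m-r\deg g-\deg g_0}(W_{C_{g_0}^{\F_{q^{r-1}}}})^R,$$
the last equality being just the definition of the reciprocal polynomial. I do not anticipate a serious obstacle, because the genuinely new content — the equality describing $\psi(\widetilde C_g)$ and the applicability of the machinery of (\ref{polynomial}) despite $\widetilde C_g$ failing to be cyclic — has already been established by the dimension count in the text. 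The one point demanding real care is to observe explicitly that the counting identity underlying Theorem \ref{main th} is a statement about arbitrary $\F_q$-subspaces, so that ``repeating the argument verbatim'' is legitimate even though $\widetilde C_g$ is not a cyclic code.
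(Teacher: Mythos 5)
Your proposal is correct and follows essentially the same route as the paper, which simply notes that the inclusion analogous to (\ref{pluscode}) becomes an equality by the dimension count $(r-1)n=(m-\deg g_0)(r-1)$ and then ``repeats verbatim'' the argument via (\ref{polynomial}) and Theorem \ref{main th}; your write-up just makes explicit the two points the paper leaves implicit (that the fiber-counting identity needs only $\F_q$-linearity of $\widetilde C_g$, and the rank--nullity computation of $|\Ker\psi_{|\widetilde C_g}|=q^{m-r\deg g-\deg g_0}$).
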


\begin{corollary} Let $C\subseteq \F_{q^r}^n$  be a Galois supplemented code with generator polynomial $g$ and $\widetilde C$ the (usual) extended code. Then
$$W_{\F_q,\widetilde C}(X,Y)=q^{n+1-r(\deg g+1)}((X+(q^{r-1}-1)Y)^{n+1}+(q^{r-1}-1)(X-Y)^{n+1}).$$
\end{corollary}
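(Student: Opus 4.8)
The plan is to derive this corollary as a special case of the theorem immediately preceding it, by specializing the extension polynomial $g_0$ to the one that produces the usual extended code. Recall that the usual extended code is obtained by appending a single parity-check coordinate, so it corresponds to taking $m=n+1$ and $g_0(x)=x-1$, which divides $x^m-1=x^{n+1}-1$ and satisfies $\deg g_0=1=m-n$. Since $g$ is Galois coprime and $g_0\in\F_q[x]$, all the hypotheses of the preceding theorem are met, so I would simply invoke it to write
$$W_{\F_q,\widetilde C}=q^{m-r\deg g-\deg g_0}(W_{C_{g_0}^{\F_{q^{r-1}}}})^R=q^{n+1-r\deg g-1}(W_{C_{x-1}^{\F_{q^{r-1}}}})^R.$$

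The one genuine computation is to identify the weight enumerator of $C_{x-1}^{\F_{q^{r-1}}}$, the length-$(n+1)$ cyclic code over $\F_{q^{r-1}}$ with generator polynomial $x-1$. This is exactly the zero-parity (even weight) code in the sense used earlier: its codewords are the vectors in $\F_{q^{r-1}}^{n+1}$ whose coordinate sum vanishes. Its weight enumerator is the standard one already quoted in the excerpt (the formula used to prove Corollary \ref{enumeratorplus}), namely
$$W_{C_{x-1}^{\F_{q^{r-1}}}}(X,Y)=\frac{1}{q^{r-1}}\bigl((Y+(q^{r-1}-1)X)^{n+1}+(q^{r-1}-1)(Y-X)^{n+1}\bigr),$$
writing $Q=q^{r-1}$ for the field size so that there are $Q-1$ nonzero scalars. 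I would then take the reciprocal polynomial, which by definition swaps $X$ and $Y$, turning the factor $Y+(Q-1)X$ into $X+(Q-1)Y$ and the factor $Y-X$ into $X-Y$; the $(Y-X)^{n+1}$ versus $(X-Y)^{n+1}$ distinction is immaterial after the swap since the sign is absorbed by the exponent being applied to $X-Y$ directly.

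Finally I would assemble the pieces: multiplying the leading constant $q^{n+1-r\deg g-1}$ by the $1/q^{r-1}$ from the enumerator formula gives the stated prefactor $q^{n+1-r(\deg g+1)}$, since $(n+1-r\deg g-1)-(r-1)=n+1-r\deg g-r=n+1-r(\deg g+1)$. The bracketed factor becomes $(X+(q^{r-1}-1)Y)^{n+1}+(q^{r-1}-1)(X-Y)^{n+1}$, exactly as claimed.

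I do not expect a serious obstacle here; the corollary is a direct specialization and the proof is essentially bookkeeping. The only point requiring mild care is matching conventions: making sure the generator polynomial $x-1$ really yields the even-weight (zero-parity) code rather than its complement, and tracking the reciprocal operation so that the $X,Y$ roles come out in the orientation asserted in the statement. I would verify the prefactor arithmetic explicitly, as that is the easiest place to drop a factor of $q^{r-1}$, and otherwise present the argument as a one-line reduction to the theorem followed by substitution of the known zero-parity weight enumerator.
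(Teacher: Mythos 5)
Your proposal is correct and is exactly the intended derivation: the paper leaves this corollary as an immediate specialization of the preceding theorem with $m=n+1$, $g_0(x)=x-1$, combined with the known weight enumerator of the zero-parity code over $\F_{q^{r-1}}$ (the same formula invoked for Corollary \ref{enumeratorplus}). Your bookkeeping of the reciprocal operation and of the prefactor $q^{n+1-r(\deg g+1)}$ checks out.
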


\section{Constructing Galois supplemented codes}

Now we want to characterize Galois supplemented  codes in terms of  cyclotomic cosets. We denote by $\F$  the splitting field of $x^n-1$ over $\F_{q^r}$ and fix
a primitive $n$th root of unity $\zeta\in\F$. The (monic) divisors of $x^n-1$ in $\F[x]$ correspond bijectively with the subsets $B\subseteq\Z/n\Z$ via the map
$B\mapsto g_B$, where $g_B(x)=\prod_{\overline k\in B}(x-\zeta^k)$ (of course, it makes sense to write $\zeta^k$ for $\overline k\in\Z/n\Z$). For ease of
notation we will drop the bar when writing elements of $\Z/n\Z$ but it should be noted that operations between integers must be then thought to be done modulo
$n$. The polynomial $g_B$  has then coefficients in $\F_{q^r}$ if and only if $q^rB=B$, so if  $C_B$  is the cyclic code with generator polynomial $g_B$, then
the map $B\mapsto C_B$ is a bijection between the subsets $B\subseteq\Z/n\Z$ satisfying $q^rB=B$ and the cyclic codes over $\F_{q^r}$ of length $n$. Our goal is
to pinpoint the $B$'s for which $C_B$ is Galois supplemented.

We shall need two elementary lemmas from group theory. Recall that if a group $G$ acts on a set $\Omega$ (we write actions on the left) a non-empty subset
$B\subset\Omega$ is called a \emph{block} for the action if for all $g\in G$, either $gB=B$ or else  $B\cap gB=\emptyset$. Notice that against the usual
convention when speaking about blocks we do not require the action to be transitive. The \emph{stabilizer} of the block $B$ is the subgroup of $G$ formed by the
elements $g\in G$ such that $gB=B$. We'll denote it $G_B$. Clearly, a subgroup  $H\leq G$ is contained in the stabilizer of $B$ if and only if $B$ is a union of
$H$-orbits.

\begin{lemma}\label{bloque} Suppose a group $G$ acts on a set $\Omega$ and $H$ is a subgroup of $G$. Then $B\subset \Omega$ is a block for the action with
stabilizer $H$ if and only if $B$ is a union of $H$-orbits, not two of which are in the same  $G$-orbit and for all
 $b\in B$  the stabilizer of $b$ in $G$, $G_b$, is contained in $H$.
\end{lemma}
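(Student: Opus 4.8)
The plan is to prove the two implications separately, in each case translating the three listed conditions into statements about when the set $gB$ meets or equals $B$. Throughout I would use the preliminary remark that a subgroup $H\leq G$ lies in $G_B$ precisely when $B$ is a union of $H$-orbits.

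For the forward implication I would assume $B$ is a block with $G_B=H$. Condition (1) is then immediate: since $H=G_B$ fixes $B$ setwise, the remark gives that $B$ is a union of $H$-orbits. For condition (3), given $b\in B$ and $g\in G_b$, the element $gb=b$ lies in $gB\cap B$, so this intersection is non-empty; the block property forces $gB=B$, i.e. $g\in G_B=H$, whence $G_b\subseteq H$. For condition (2), if two $H$-orbits inside $B$ lay in the same $G$-orbit, I would pick representatives $b_1,b_2$ with $b_2=gb_1$; then $b_2\in gB\cap B$, so again $gB=B$ and $g\in H$, which shows $b_1,b_2$ lie in the same $H$-orbit and the two orbits in fact coincide.

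For the converse I would assume the three conditions and first note that condition (1) gives $H\subseteq G_B$, once more by the remark. It then suffices to show that $gB\cap B\neq\emptyset$ implies both $gB=B$ and $g\in H$; this simultaneously establishes the block property and the reverse inclusion $G_B\subseteq H$, yielding $G_B=H$. So suppose $gb=c$ with $b,c\in B$. The elements $b$ and $c$ lie in the same $G$-orbit, and each lies in an $H$-orbit contained in $B$ by condition (1); by condition (2) these two $H$-orbits coincide, so $c=hb$ for some $h\in H$. Then $h^{-1}g$ fixes $b$, so $h^{-1}g\in G_b\subseteq H$ by condition (3), giving $g\in H$ and hence $gB=B$.

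The argument is essentially mechanical bookkeeping of the definitions, so I do not expect a genuine obstacle; the single point requiring care is the backward direction, where condition (2) must be invoked precisely to pass from ``$b$ and $c$ are $G$-conjugate and both in $B$'' to ``$c=hb$ with $h\in H$'', after which condition (3) upgrades this to $g\in H$. I would also flag at the outset that $B$ is non-empty (being a block, or a non-empty union of $H$-orbits), so that $gB=B$ indeed entails $gB\cap B\neq\emptyset$ when extracting $G_B\subseteq H$.
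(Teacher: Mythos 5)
Your proof is correct and follows essentially the same route as the paper's: both directions reduce to the observation that $gB\cap B\neq\emptyset$ forces $gB=B$, with condition (2) used to replace a $G$-translate by an $H$-translate and condition (3) to absorb the residual stabilizer element into $H$. The only (welcome) addition is your explicit remark that $B\neq\emptyset$ is needed to extract $G_B\subseteq H$ from the block property.
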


\begin{proof}
If $B$ is a block with stabilizer $H$,  $B$ is a union of $H$-orbits as we have just said  and, moreover, if two of them, say  $Hb$ and $Hb'$, are contained in
the same  $G$-orbit then $b'=gb$ for some $g\in G$ and $b'\in gB\cap B$, whence $gB=B$, that is $g\in G_B=H$, so $Hb=Hb'$. In addition, if $g\in G_b$, $b=gb\in
gB\cap B$, so $gB=B$ and again $g\in G_B=H$.

Conversely, assume  $B\subset \Omega$ satisfies the conditions stated in the lemma. It is clear that $B$ is invariant by $H$. If for some $g\in G$, $gB\cap B$
is non-empty, then there exist $b,b'\in B$ such that $b'=gb$ and $Hb'$ and $Hb$ are contained in the same $G$-orbit. Hence $Hb=Hb'$ and in fact $b'=hb$ for some
$h\in H$. Then  $h^{-1}g\in G_b\subseteq H$ , which implies $g\in H$ and $B=gB$. So $B$ is a block with stabilizer $H$.
\end{proof}

\begin{lemma}\label{epim} Let $\GG$ and $G$ be two groups acting on sets $R$ and $\Omega$, respectively. Suppose $\apl\pi \GG G$ is a group epimorphism and
$\apl\beta R\Omega$ is a bijection such that $\beta (\sigma\zeta)=\pi(\sigma)\beta(\zeta)$ for all     $\sigma\in\GG$ and $\zeta\in R$. Let $\HH$ be a subgroup
of  $\GG$ and  $H=\pi(\HH)$. Then $R_0\subset R$ is a block for $\GG$ with stabilizer $\HH$ if and only if   $\Ker\pi\subseteq \HH$ and $B=\beta (R_0)$ is a
block for $G$ with stabilizer $H$.
\end{lemma}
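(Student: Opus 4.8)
The plan is to exploit the $\pi$-equivariance of $\beta$ at the level of subsets. First I would record the elementary fact that for any $\sigma\in\GG$ and any $S\subseteq R$ one has $\beta(\sigma S)=\pi(\sigma)\beta(S)$, which is immediate from the defining relation $\beta(\sigma\zeta)=\pi(\sigma)\beta(\zeta)$ applied pointwise. Because $\beta$ is a bijection it preserves equality, disjointness, and the formation of unions and intersections, so it will translate every block-theoretic statement about $R_0$ under $\GG$ into the corresponding statement about $B=\beta(R_0)$ under $G$, and conversely via $\beta^{-1}$. Since $\pi$ is surjective, an arbitrary $g\in G$ may always be written $g=\pi(\sigma)$, which lets me pass freely between the two actions.

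For the forward implication, assume $R_0$ is a block with stabilizer $\HH$. To see $\Ker\pi\subseteq\HH$, note that any $\kappa\in\Ker\pi$ satisfies $\beta(\kappa\zeta)=\pi(\kappa)\beta(\zeta)=\beta(\zeta)$ for all $\zeta\in R$, so by injectivity $\kappa$ fixes $R$ pointwise; in particular $\kappa R_0=R_0$, giving $\kappa\in\GG_{R_0}=\HH$. The block property of $B$ then follows by writing $g=\pi(\sigma)$ and combining $gB=\beta(\sigma R_0)$ with the block alternative for $R_0$ and the bijectivity of $\beta$. For the stabilizer, the same identity $gB=\beta(\sigma R_0)$ shows that $gB=B$ precisely when $\sigma R_0=R_0$, that is $g\in G_B$ if and only if some preimage $\sigma$ lies in $\HH$; combined with the surjectivity of $\pi$ this yields $G_B=\pi(\HH)=H$.

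The backward implication runs in reverse, and this is where the hypothesis $\Ker\pi\subseteq\HH$ does the real work. Assuming $B$ is a block for $G$ with stabilizer $H$, the block property of $R_0$ and the inclusion $\HH\subseteq\GG_{R_0}$ follow formally by applying $\beta^{-1}$ to the corresponding facts about $B$; for the latter, $\eta\in\HH$ gives $\pi(\eta)\in H=G_B$, hence $\beta(\eta R_0)=\pi(\eta)B=B=\beta(R_0)$ and so $\eta R_0=R_0$. The one non-formal point, which I expect to be the main obstacle, is the reverse inclusion $\GG_{R_0}\subseteq\HH$: if $\sigma R_0=R_0$ then $\pi(\sigma)B=\beta(\sigma R_0)=B$, so $\pi(\sigma)\in H=\pi(\HH)$ and $\pi(\sigma)=\pi(\eta)$ for some $\eta\in\HH$, whence $\sigma\eta^{-1}\in\Ker\pi\subseteq\HH$ and therefore $\sigma\in\HH$. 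This is exactly the step that requires $\Ker\pi\subseteq\HH$; without it, $\GG_{R_0}$ could properly contain $\HH$, and indeed the forward direction shows that this inclusion is necessary for $R_0$ to be a block with stabilizer $\HH$.
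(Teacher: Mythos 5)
Your proof is correct and rests on the same observation as the paper's: $\Ker\pi$ acts trivially on $R$, so the two actions correspond under $\beta$ and $\pi$, and blocks and stabilizers transfer accordingly. The paper dispatches this by passing to the quotient $\GG/\Ker\pi\cong G$ and declaring the correspondence clear, whereas you verify both implications element by element (including the one genuinely non-formal step, $\GG_{R_0}\subseteq\HH$ via $\Ker\pi\subseteq\HH$), which is the same argument written out in full.
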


\begin{proof} The hypotheses on $\pi$ and $\beta$ imply that $\Ker\pi$ is contained in the kernel of the action of $\GG$ on $R$, so we can consider
$\GG/\Ker\pi$ acting on $R$ and then, via the isomorphism between $\GG/\Ker\pi$ and $G$ and the bijection $\beta$, this action corresponds to the action of $G$
on $\Omega$. It is clear that, under these identifications, action blocks and stabilizers correspond.
\end{proof}

The group of units of $\Z/n\Z$, $(\Z/n\Z)^\times$, acts by multiplication on $\Omega=\Z/n\Z$. Since $(q,n)=1$,  $G=\langle q\rangle\leq (\Z/n\Z)^\times$ acts on
$\Omega$. The orbits of the action of $G$ on $\Omega$ are sometimes called in the literature \emph{$q$-cyclotomic cosets}, but we prefer to refer to them simply
as $G$-orbits. Let $H=\langle  q^r\rangle\subseteq G$.
Sometimes we will have to consider $q$ as an element of  $(\Z/m\Z)^\times$ for some divisor  $m$ of $n$. We'll write $\ord (q \mod m)$ for its order in this
group. Note that if $m\mid m'$, $\ord (q\mod m)\mid \ord (q\mod m')$.

\begin{theorem}\label{principal2} Let $B\subseteq\Z/n\Z$. Then $g_B\in\F_{q^r}[x]$ is Galois coprime   if and only if  $B$ is a union of $H$-orbits, not two of
which are in the same $G$-orbit and $r\mid \ord (q\mod m_k)$, $m_k=n/(n,k)$, for all $k\in B$.
\end{theorem}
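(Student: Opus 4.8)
The plan is to translate ``$g_B$ is Galois coprime'' into a statement about blocks, working first with the action of the Galois group of the splitting field $\F$ on the $n$th roots of unity and then transporting it, via Lemma~\ref{epim}, to the action of $G=\langle q\rangle$ on $\Z/n\Z$. Throughout I write $\Phi$ for the Frobenius $x\mapsto x^q$ of $\F$, so that $\Phi$ restricts to $\sigma$ on $\F_{q^r}$ and sends $\zeta\mapsto\zeta^q$; I also set $d=\ord(q\mod n)$ and $N=[\F:\F_q]$, and recall that $d\mid N$ (since $\F\supseteq\F_q(\zeta)=\F_{q^d}$) and $r\mid N$ (since $\F\supseteq\F_{q^r}$).

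First I would record how the Galois conjugates of $g_B$ look. As $\Phi$ sends $\zeta^k$ to $\zeta^{qk}$ and the coefficients of $g_B$ lie in $\F_{q^r}$, applying $\sigma^i$ to them gives $g_B^{\sigma^i}=g_{q^iB}$, whence $(g_B,g_B^{\sigma^i})=g_{B\cap q^iB}$. Therefore $g_B\in\F_{q^r}[x]$ and $g_B$ is Galois coprime precisely when $q^rB=B$ and $B\cap q^iB=\emptyset$ for $1\le i\le r-1$; equivalently, when the root set $R_0=\{\zeta^k\mid k\in B\}$ is fixed setwise by $\HH=\Gal(\F/\F_{q^r})=\langle\Phi^r\rangle$ and the translates $R_0,\Phi R_0,\dots,\Phi^{r-1}R_0$ are pairwise disjoint. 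Because $r\mid N$, one has $\Phi^i\in\HH$ if and only if $r\mid i$, and this lets me rephrase the disjointness condition as: $R_0$ is a block for $\GG=\Gal(\F/\F_q)$ with stabilizer exactly $\HH$. (If $B=\emptyset$ then $g_B=1$ is Galois coprime and the three asserted conditions hold vacuously, so I would set this case aside and assume $B\ne\emptyset$ henceforth.)

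Next I would descend to $\Z/n\Z$. The rule $\Phi\mapsto q$ defines an epimorphism $\apl\pi\GG G$ with $\Ker\pi=\Gal(\F/\F_q(\zeta))=\langle\Phi^d\rangle$, and $\beta\colon\zeta^k\mapsto k$ is a bijection satisfying $\beta(\Phi^i\zeta^k)=\pi(\Phi^i)\,\beta(\zeta^k)$. Taking $\HH$ as above, so that $H=\pi(\HH)=\langle q^r\rangle$, Lemma~\ref{epim} converts ``$R_0$ is a block for $\GG$ with stabilizer $\HH$'' into the conjunction of $\Ker\pi\subseteq\HH$ and ``$B=\beta(R_0)$ is a block for $G$ with stabilizer $H$''. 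Since $\GG$ is cyclic with $d\mid N$ and $r\mid N$, the inclusion $\langle\Phi^d\rangle\subseteq\langle\Phi^r\rangle$ is equivalent to $r\mid d$. By Lemma~\ref{bloque}, the remaining block condition says exactly that $B$ is a union of $H$-orbits, that no two of these lie in one $G$-orbit, and that $G_k\subseteq H$ for all $k\in B$; the first two of these are the first two requirements of the theorem.

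Finally I would match ``$r\mid d$ and $G_k\subseteq H$ for all $k$'' with the third requirement $r\mid\ord(q\mod m_k)$. A short computation identifies the stabilizer as $G_k=\langle q^{e_k}\rangle$ with $e_k=\ord(q\mod m_k)$, and $m_k\mid n$ forces $e_k\mid d$. In the cyclic group $G$ of order $d$ the inclusion $\langle q^{e_k}\rangle\subseteq\langle q^r\rangle$ amounts to $\gcd(r,d)\mid e_k$; granting $r\mid d$ this reads $r\mid e_k$, which is the third condition, while conversely $r\mid e_k\mid d$ already yields $r\mid d$. I expect this to be the main obstacle: the raw block condition $G_k\subseteq H$ only delivers $\gcd(r,d)\mid e_k$, which is strictly weaker than $r\mid e_k$, and it is precisely the extra clause $\Ker\pi\subseteq\HH$ (that is, $r\mid d$) supplied by Lemma~\ref{epim} that strengthens it to the desired $r\mid\ord(q\mod m_k)$. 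Combining the three equivalences yields the theorem.
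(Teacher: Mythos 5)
Your proof is correct and takes essentially the same route as the paper's: you translate Galois coprimality into the statement that the root set is a block for $\GG=\Gal(\F/\F_q)$ with stabilizer $\HH=\Gal(\F/\F_{q^r})$, transport this to the action of $G=\langle q\rangle$ on $\Z/n\Z$ via Lemmas \ref{bloque} and \ref{epim}, and then identify $\Ker\pi\subseteq\HH$ and $G_k\subseteq H$ with the divisibility conditions $r\mid\ord(q\bmod n)$ and $r\mid\ord(q\bmod m_k)$. Your explicit remark that the raw condition $G_k\subseteq H$ only gives $\gcd(r,d)\mid e_k$ and that it is the clause $\Ker\pi\subseteq\HH$ (equivalently $r\mid d$, which conversely follows from $r\mid e_k\mid d$ when $B\neq\emptyset$) that upgrades this to $r\mid e_k$ is a point the paper passes over more quickly, but the argument is the same.
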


\begin{proof}
The group $\GG=\Gal(\F/\F_q)$ (recall that $\F$ is the splitting field of $x^n-1$ over $\F_{q^r}$) acts naturally on the set $R$ of the $n$th roots of unity
contained in $\F$. We fix a primitive one, $\zeta\in R$. Now $R$ can be identified with  $\Omega=\Z/n\Z$ via the map $\apl\beta R\Omega$ given by $\beta
(\zeta^k)=k$. On the other hand, there is a natural group epimorphism $\pi$ from  $\GG$ onto $G$ given by $\sigma\mapsto  q$, where $\sigma\in\GG$ is the
Frobenius automorphism. This way the action of   $\GG$ on $R$ corresponds with the action of $G$ on $\Omega$ in the sense of Lemma \ref{epim}. The statement
that $g_B\in\F_{q^r}[x]$ is Galois coprime is then equivalent  to $R_0=\{\zeta^k\mid   k\in B\}\subseteq R$ being a block for  $\GG$ with stabilizer
$\HH=\Gal(\F/\F_{q^r})=\langle\sigma^r\rangle$. The image of $\HH$ in $G$ is of course  $H$  so, by Lemma \ref{epim},  $g_B\in\F_{q^r}[x]$ is Galois coprime if
and only if $\Ker\pi\subseteq\HH$ and $B$ is a block for $G$ with stabilizer $H$.

Set $s=\ord(q\mod n)$. Then $\Ker\pi=\langle \sigma^s\rangle$. Since both $r$ and $s$ are divisors of the order of  $\sigma$ (this order is actually the least
common multiple of $r$ and $s$) we conclude that $\Ker\pi\subseteq\HH$ if and only if $r\mid s$. On the other hand the fact that $B$ is a block for  $G$ with
stabilizer $H$ is equivalent by Lemma \ref{bloque} to $B$ being a union of $H$-orbits, not two of which are in the same $G$-orbit and $G_k\subseteq H$ for all
$k\in B$ ($G_k$ is the stabilizer in $G$ of $k$).

We only have to show that if  $k\in B$, $G_k\subseteq H$ if and only if $r\mid \ord(q\mod m_k)=s_k$. Indeed, $q^j\in G$ fixes $k\in B$ if and only if
$q^jk\equiv k\pmod n$, that is $q^j\equiv 1 \pmod {m_k}$  or equivalently $s_k\mid j$. Therefore $G_k=\langle  q^{s_k}\rangle$. Given that $r$ and $s_k$ are
divisors of $s=\ord(q\mod n)$, we conclude that $G_k\subseteq H$ if and only if $r\mid s_k$.
\end{proof}

For an easy reference, we will say that the sets $B$ that satisfy the conditions in the last lemma are
 \emph{$q^r$-blocks} in $\Z/n\Z$. So, for fixed $q$, $r$ and $n$, the correspondences $B\mapsto g_B$ and $B\mapsto C_B$ establish  bijections between the
 $q^r$-blocks of  $\Z/n\Z$ and the  Galois coprime divisors of $x^n-1$ in $\F_{q^r}[x]$ and  the Galois supplemented   codes over $\F_{q^r}$ of length $n$,
 respectively.

It follows from Theorem \ref{principal2} that Galois supplemented codes over $\F_{q^r}$ and length $n$ exist if and only if $r\mid\ord(q\mod n)$. In the
important case $\F_4/\F_2$ this is equivalent to $2$ having even order modulo some prime divisor $p$ of $n$. By a classical result of Hasse's \cite{Hasse} this
is known to happen for an (asymptotic) average of $17$ primes out of every $24$.

\begin{example}
Let $q$, $r$ and $n$ be as usual and suppose $r\mid\ord (q\mod n)$. Suppose   $B\leq (\Z/n\Z)^\times$ is a subgroup such that $B\cap G=H$, where $G=\langle
q\rangle$ and $H=\langle  q^r\rangle\leq (\Z/n\Z)^\times$. Under these conditions, the hypotheses in the last theorem are immediate to check, so
$g_B\in\F_{q^r}[x]$ is Galois coprime. This happens in particular if $r=2$, $q$ is a quadratic non-residue modulo $n$ and $B$ is the set of quadratic residues
modulo $n$, so quadratic-residue codes over $\F_{q^2}$ are Galois supplemented and  Theorem \ref{principal} is valid for them. We don't even need to take $B$ to
be the whole set of quadratic-residues, any subgroup of it containing $q^2$ would do.
\end{example}

\section{Some properties of Galois supplemented codes}

Evidently, if a polynomial $g\in\F_{q^r}[x]$ is Galois coprime, $g(1)\not=0$, so the repetition code is always contained in a Galois supplemented code.

\begin{definition} We say that a Galois supplemented code $C\subseteq\F_{q^r}^n$ is \emph{complete} if the intersection of all the codes  $C^\sigma$,
$\sigma\in\Gal(\F_{q^r}/\F_q)$, is exactly the repetition code. Equivalently, when its generator polynomial $g=g_B$ satisfies one of the following equivalent
conditions:
\begin{enumerate}
\item $\prod_{\sigma\in\Gal(\F_{q^r}/\F_q)}g^\sigma(x)=1+x+\cdots +x^{n-1}.$
\item $\cup_{k=0}^{r-1}q^kB=\Z/n\Z\backslash\{0\}$ (disjoint union).
\end{enumerate}
In this case we will also say that $g$ and $B$ are \emph{complete}.
\end{definition}

It follows from Theorem \ref{principal2} that, given $q$, $r$ and $n$, Galois supplemented complete codes exist if and only if for any prime divisor $p$ of $n$,
the order of $q$ modulo $p$  is a multiple of $r$.

\begin{example}
Suppose an irreducible, Galois coprime complete polynomial $g\in\F_{q^r}[x]$ exists such that $g(x)\mid x^n-1$. Then the irreducible polynomial over $\F_{q^r}$
of an $n$th primitive root of unity  $\zeta$ is one of the polynomials $g^\sigma$, $\sigma\in\Gal (\F_{q^r}/\F_q)$, whence
$$\frac{n-1}{r}=\deg g =\deg g^\sigma =|\F_{q^r}(\zeta):\F_{q^r}|=\frac{\ord(q \mod n)}{r},$$
(remember that $r\mid\ord(q \mod n)$ by Theorem \ref{principal2}) so $\ord(q \mod n)=n-1$ and this implies that $n$ is prime and $q$ is a primitive root modulo
$n$.

Conversely, if   $n$ is prime, $r\mid n-1$ and $q$ is a primitive root modulo $n$, the polynomial $1+x+\cdots +x^{n-1}$ factorizes in $\F_{q^r}[x]$ as the
product of $r$ irreducible polynomials of degree $(n-1)/r$. If $H=\langle q^r\rangle$, these polynomials are $g_{q^kH}$, $k=0,\dots ,r-1$, so they are Galois
coprime and complete. Of course in this situation $H$ consists of the $r$th powers of the non-zero residues modulo $n$, so the corresponding code $C_H$ is the
code of the $r$th powers.
\end{example}

\subsection{Minimum weight of words of odd type}

\begin{lemma}\label{ceros} Let $K$ be a field and   $h(x)\in K[x]$. Suppose that among the coefficients of $h$ corresponding to the powers of $x$ of degree $<
n$, $k$ of them are the same and not zero. Then the polynomial  $h(x)(x^n-1)$ has at least $k$ terms of degree $\geq n$ with non-zero coefficient.
\end{lemma}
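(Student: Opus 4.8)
The plan is to write $h(x)=\sum_i c_ix^i$ (with the convention $c_i=0$ for $i<0$ or $i>\deg h$) and expand $h(x)(x^n-1)=x^nh(x)-h(x)$, so that the coefficient of $x^m$ in $h(x)(x^n-1)$ is $c_{m-n}-c_m$. For a term of degree $\geq n$, say $m=j+tn$ with $j\in\{0,1,\dots,n-1\}$ and $t\geq 1$, this coefficient is $c_{j+(t-1)n}-c_{j+tn}$. The idea is to group all degrees according to their residue $j$ modulo $n$ and argue within each residue class separately, producing one guaranteed nonzero high-degree coefficient for every class whose initial coefficient $c_j$ is nonzero.

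First I would fix $j\in\{0,\dots,n-1\}$ and consider the sequence $a_t=c_{j+tn}$ for $t\geq 0$; since $h$ has finite degree, this sequence is eventually $0$. The coefficients of $h(x)(x^n-1)$ that lie in this residue class and have degree $\geq n$ are precisely the consecutive differences $a_{t-1}-a_t$ for $t\geq 1$. If $a_0=c_j\neq 0$, these differences cannot all vanish: were $a_{t-1}=a_t$ for every $t\geq 1$, the whole sequence would be constant and, being eventually $0$, would force $a_0=0$, a contradiction. Hence each residue class $j$ with $c_j\neq 0$ contributes at least one nonzero coefficient of $h(x)(x^n-1)$ at some degree $j+tn\geq n$.

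The key bookkeeping point is that nonzero coefficients arising from distinct residue classes sit at distinct degrees, since a coefficient coming from class $j$ has degree congruent to $j$ modulo $n$. Therefore the number of nonzero coefficients of degree $\geq n$ is at least the number of indices $j\in\{0,\dots,n-1\}$ with $c_j\neq 0$. By hypothesis, $k$ of the coefficients $c_0,\dots,c_{n-1}$ share a common nonzero value, so in particular at least $k$ of them are nonzero, which yields the claimed bound of $k$ nonzero terms of degree $\geq n$.

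I do not expect a serious obstacle here; the proof is essentially a telescoping-plus-pigeonhole argument. The only points requiring care are the boundary conventions (setting $c_i=0$ outside the support of $h$, and keeping $t\geq 1$ so that every degree counted is genuinely $\geq n$) and the remark that the lemma's hypothesis is in fact stronger than what the argument needs: only the nonvanishing of $k$ of the low-degree coefficients is used, not that they coincide in value.
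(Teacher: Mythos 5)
Your proof is correct and follows essentially the same route as the paper: both expand $h(x)(x^n-1)$ as a telescoping difference of the degree-$n$ blocks of $h$ and count, residue class by residue class modulo $n$, the nonzero coefficients in degrees $\geq n$. The only difference is bookkeeping: the paper tracks the positions where the coefficient stays equal to the common value $\lambda$ via nested sets $X_0\supseteq X_1\supseteq\cdots$, whereas you note that an eventually-zero sequence with nonzero initial term must have a nonzero consecutive difference, which (as you correctly observe) shows the hypothesis ``equal and nonzero'' can be weakened to ``nonzero''.
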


\begin{proof}
Let $\lambda\not=0$ be the common coefficient for the $k$ terms of $h$ as indicated in the statement and write  $h(x)=h_0(x)+h_1(x)x^n+\cdots +h_s(x)x^{sn}$,
$\deg h_i<n$. Let $X_0$ be the set of the indices $0\leq j<n$
such that the coefficient of $x^j$ is $\lambda$. For $i\geq 1$, we define inductively $X_i$ as the set of the $j\in X_{i-1}$ such that the coefficient of $x^j$
in $h_i(x)$ is $\lambda$. Since
$h(x)(x^n-1)$ equals $$-h_0(x)+(h_0(x)-h_1(x))x^n+\cdots +(h_{s-1}(x)-h_s(x))x^{sn}+h_s(x)x^{(s+1)n},$$
it is clear that among the powers of $x$ of degree $\geq n$, at least
 $$(|X_0|-|X_1|)+\cdots +
(|X_{s-1}|-|X_s|)+|X_s|=|X_0|=k$$ have non-zero coefficients.
\end{proof}

\begin{theorem} Let $C\subseteq\F_{q^r}^n$ be a Galois supplemented complete code and  $u=(x_1,\dots,x_n)\in C$ a word of odd type (that is, $x_1+\cdots
+x_n\not =0$). Then $\wt (u)\geq \root r\of n$.
\end{theorem}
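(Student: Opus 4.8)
The plan is to reduce the weight bound to a term-counting inequality for a single auxiliary polynomial built as the full Galois norm of the codeword. Represent $u$ by the unique polynomial $f\in\F_{q^r}[x]$ of degree $<n$ lying in its class; then $\wt f=\wt u=:d$, the generator $g$ divides $f$ in $\F_{q^r}[x]$ (since $g\mid x^n-1$), and $u$ being of odd type means exactly $f(1)=x_1+\cdots+x_n\neq 0$. I would set $N=\prod_{i=0}^{r-1}f^{\sigma^i}$, where $\sigma$ is the Frobenius of $\F_{q^r}/\F_q$. Being a product over a full Galois orbit, $N$ is $\Gal(\F_{q^r}/\F_q)$-invariant, so $N\in\F_q[x]$.

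First I would record two structural facts about $N$. Applying $\sigma^i$ to $g\mid f$ gives $g^{\sigma^i}\mid f^{\sigma^i}$, so $\prod_{i=0}^{r-1}g^{\sigma^i}$ divides $N$ in $\F_{q^r}[x]$; by completeness (condition (i) in the definition of a complete code) this divisor is exactly $1+x+\cdots+x^{n-1}$. On the other hand, each factor $f^{\sigma^i}$ has the same support as $f$, namely $d$ nonzero terms, so when $N$ is expanded as a polynomial it is a sum of at most $d^r$ monomials; hence $N$ has at most $d^r$ nonzero coefficients.

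Next I would reduce modulo $x^n-1$. Since $1+x+\cdots+x^{n-1}$ divides $N$ and annihilates $x-1$ in $\F_{q^r}[x]/(x^n-1)$, the image $\overline N$ lies in the one-dimensional ideal spanned by $1+x+\cdots+x^{n-1}$, so $\overline N=c\,(1+x+\cdots+x^{n-1})$ for some scalar $c$. Writing $N=\overline N+(x^n-1)Q$ and evaluating at $x=1$ gives $cn=N(1)=\prod_{i=0}^{r-1}f(1)^{\sigma^i}$, the norm of $f(1)$ down to $\F_q$. This is the step where the odd-type hypothesis is essential: $f(1)\neq 0$ forces $N(1)\neq 0$, and since $(n,q)=1$ we conclude $c\neq 0$.

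Finally I would combine the two sides. Because every coefficient of $\overline N$ equals $c\neq 0$, each of the $n$ residue classes modulo $n$ must contain an exponent at which $N$ has a nonzero coefficient, so $N$ has at least $n$ nonzero terms. Together with the bound of at most $d^r$ nonzero terms this yields $n\le d^r$, i.e. $\wt u=d\ge\sqrt[r]{n}$. (The passage ``full support of $\overline N$ forces at least $n$ nonzero terms of $N$'' is exactly the coefficient bookkeeping packaged in Lemma \ref{ceros}.) I expect the main obstacle to be the middle step: one must be certain the scalar $c$ is genuinely nonzero, and it is precisely here that completeness, used to identify the divisor as $1+x+\cdots+x^{n-1}$, and odd type, used to make the norm $N(1)$ nonzero, both enter; once $c\neq 0$ is secured, the remaining inequality is the elementary observation that a norm of $d$-term polynomials cannot have more than $d^r$ terms.
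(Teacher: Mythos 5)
Your proof is correct and follows the paper's strategy essentially step for step: form the norm $N=\prod_{i=0}^{r-1}f^{\sigma^i}$, use completeness to see that $\prod_{i}g^{\sigma^i}=1+x+\cdots+x^{n-1}$ divides $N$, use the odd-type hypothesis together with $(n,q)=1$ to conclude that $N$ reduces modulo $x^n-1$ to a \emph{nonzero} scalar multiple of $1+x+\cdots+x^{n-1}$, and then compare term counts against the upper bound of $d^r$ terms for a product of $r$ polynomials with $d$ terms each. The one place you genuinely diverge is the final counting step: where the paper writes $N=c(1)(1+x+\cdots+x^{n-1})+h(x)(x^n-1)$ and invokes Lemma \ref{ceros} with some bookkeeping on the coefficients of $h$, you observe directly that each coefficient of the reduction of $N$ modulo $x^n-1$ is the sum of the coefficients of $N$ over one residue class of exponents modulo $n$, so if all $n$ of these sums are nonzero then each class contributes at least one nonzero term and $N$ has at least $n$ nonzero coefficients. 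This pigeonhole argument is correct, slightly cleaner, and makes Lemma \ref{ceros} unnecessary; otherwise the two proofs are the same.
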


\begin{proof}
Let $u\in C=C_g$ be a word of odd type, that we identify with a polynomial $f$ of degree less than  $n$. Since $f$ is a multiple of $g$,
$\prod_{\sigma\in\Gal(\F_{q^r}/\F_q)}f^\sigma$ is a multiple of $\prod_{\sigma\in\Gal(\F_{q^r}/\F_q)}g^\sigma=1+x+\cdots +x^{n-1}$, say
\begin{equation}\label{norma}
\prod_{\sigma\in\Gal(\F_{q^r}/\F_q)}f^\sigma (x)=c(x)(1+x+\cdots +x^{n-1}),\end{equation}
where $c(1)\not=0$, given that $f(1)\not=0$ as $f$ has odd type. Modulo $x^n-1$, the right hand side of (\ref{norma}) is $c(1)(1+x+\cdots +x^{n-1})$, so for
some polynomial $h$ we have
\begin{equation}\label{pol}\prod_{\sigma\in\Gal(\F_{q^r}/\F_q)}f^\sigma (x)=c(1)(1+x+\cdots +x^{n-1})+h(x)(x^n-1).\end{equation}
Suppose that among the first $n$ coefficients of $h$, $k$ of them are equal to $-c(1)$.
Then among the first $n$ coefficients of the right-hand side of (\ref{pol}), $n-k$ are non-zero. On the other hand Lemma \ref{ceros} guarantees that at least
$k$ coefficients of (\ref{pol}) of powers of $x$ of degree $\geq n$ are also non-zero, so a total of  at least $n$  coefficients are non-zero. As for the
left-hand side of  (\ref{pol}), notice that all the polynomials  $f^\sigma$ have the same number of non-zero coefficients, namely, $t=\wt(u)$. Hence the product
of all of them has at most   $t^r$ non-zero coefficients. The result follows now directly.
\end{proof}

\subsection{Idempotent polynomials}

We recall that the \emph{idempotent generator polynomial} of a cyclic code $C=C_g\subseteq \F_{q^r}[x]/(x^n-1)$  is the unique polynomial $e$ of degree less
than $n$ such that its image in $\F_{q^r}[x]/(x^n-1)$ is the idempotent generating $C$ as an ideal. Of course it is the polynomial of degree less than $n$ that
is  $0$ on the roots of $g$ and $1$ on the $n$th roots of unity that are not roots of $g$. Therefore
\begin{eqnarray*}e(x)&=&\frac{1}{n}\sum_{\stackrel{\scriptstyle\zeta^n=1}{\scriptstyle g(\zeta)\not=0}}\zeta\frac{x^n-1}{x-\zeta}=
\frac{1}{n}\sum_{\stackrel{\scriptstyle\zeta^n=1}{\scriptstyle g(\zeta)\not=0}}\frac{(x/\zeta)^n-1}{(x/\zeta)-1}
= \frac{1}{n}\sum_{\stackrel{\scriptstyle\zeta^n=1}{\scriptstyle g(\zeta)\not=0}} (1+\frac{x}{\zeta}+\cdots +
(\frac{x}{\zeta})^{n-1})=\frac{1}{n}\sum_{j=0}^{n-1}(\kern -4pt\sum_{\stackrel{\scriptstyle\zeta^n=1}{\scriptstyle g(\zeta)\not=0}} \zeta^{-j}) x^j\\
&=&\frac{\dim C}{n}-
\frac{1}{n}\sum_{j=1}^{n-1}(\kern -4pt\sum_{g(\zeta)=0} \zeta^{-j}) x^j,\end{eqnarray*}
where, the last equality holds because $\sum_{\zeta^n=1}\zeta^{-j}=0$ for $1\leq j\leq n-1$. If, for $X\subseteq \{1,\dots ,n-1\}$ and $\lambda\in
\F_{q^r}^\times$, we set
$$S_X(x)=\sum_{i\in X}x^i,\ \ B_\lambda=\{1\leq j\leq n-1\mid \sum_{g(\zeta)=0} \zeta^{-j}=\lambda\},$$
we can write
\begin{equation}\label{idempotentegeneral}e(x)=\frac{\dim C}{n}-\frac{1}{n}\sum_{\lambda\in\F_{q^r}^\times}\lambda S_{B_\lambda}(x).\end{equation}
The sets $B_\lambda$, $\lambda\in\F_{q^r}^\times$, partition $\{1,\dots,n-1\}$ and, by the natural identification of this set with $\Z/n\Z\backslash\{0\}$, this
is a partition into blocks by the action of the group $G=\langle q\rangle\leq (\Z/n\Z)^\times$ (given that  $  q^kB_\lambda=B_{\lambda^{q^k}}$). Moreover the
sets $B_\lambda$ are invariant by $H=\langle q^r\rangle$, and the stabilizer of $B_\lambda$ is precisely $H$ if $\lambda$ is a primitive element of the
extension $\F_{q^r}/\F_q$. In particular for a fixed primitive element $\lambda$, any Galois supplemented code $C$ (respectively, Galois coprime polynomial $g$
or block $B$) originates another Galois supplemented code $C_\lambda$ (respectively, Galois coprime polynomial $g_\lambda$ or block $B_\lambda$).  For the
quaternary extension $\F_4/\F_2$ this correspondence defines a curious duality whose statement and proof we give in terms of blocks.

\begin{theorem} Let $\lambda\in\F_4\backslash\F_2$, $\zeta$ a primitive $n$th root of unity and  $B\subseteq\Z/n\Z$ a complete $2^2$-block. We put
$$B^*=B_\lambda=\{1\leq j\leq n-1\mid \sum_{i\in B}{(\zeta^i})^{-j}=\lambda\}.$$ Then $B^*\subseteq\Z/n\Z$ is a complete $2^2$-block too and   $B^{**}=-B$ or
$-2B$ according to $n\equiv 1$ or $3 \pmod 4$, respectively. In particular the complete Galois supplemented codes over $\F_4$ of length $n$ are exactly the
cyclic codes with idempotent generator polynomials of the type
\begin{equation}\label{generalform}\frac{n+1}{2}+\lambda\sum_{i\in B}x^i+\lambda^2\sum_{i\in B'}x^i,\end{equation}
where $B\subseteq\{1,\dots, n\}$ is a complete  $2^2$-block and $B'$ is its complementary set in $\{1,\dots ,n-1\}$.
\end{theorem}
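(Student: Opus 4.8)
The plan is to reduce everything to the single quantity $\widehat A(j)=\sum_{i\in A}\zeta^{-ij}$ attached to a subset $A\subseteq\Z/n\Z$; for $A=B$ this is exactly the sum defining $B^*$, so $B^*=\{1\le j\le n-1\mid \widehat B(j)=\lambda\}$. First I would record two identities for $j\neq 0$. Because squaring is additive in characteristic $2$, $\widehat B(j)^2=\sum_{i\in B}\zeta^{-2ij}=\widehat B(2j)$. On the other hand, completeness of $B$ means $B\sqcup 2B=\Z/n\Z\setminus\{0\}$, and since $\sum_{k=1}^{n-1}\zeta^{-kj}=-1=1$, summing the two halves gives $\widehat B(j)+\widehat B(2j)=1$. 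Combining these yields $\widehat B(j)^2+\widehat B(j)+1=0$, so $\widehat B(j)$ is a root of $x^2+x+1$ and hence $\widehat B(j)\in\{\lambda,\lambda^2\}$ for every $j\neq 0$. In particular $B^*$ together with $B_{\lambda^2}$ partitions $\{1,\dots,n-1\}$.

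Next I would check that $B^*$ is a complete $2^2$-block. From $\widehat B(4j)=\widehat B(j)^4=\widehat B(j)$ (as $\widehat B(j)\in\F_4$) the set $B^*$ is invariant under multiplication by $4$, i.e.\ a union of $H$-orbits; and from $\widehat B(2j)=\widehat B(j)^2$ one gets $j\in B^*\Leftrightarrow 2j\in B_{\lambda^2}$, so $2B^*=B_{\lambda^2}$ and therefore $B^*\sqcup 2B^*=\Z/n\Z\setminus\{0\}$, which is the completeness condition. For the remaining block axioms I would appeal to Lemma \ref{bloque}: since a complete block exists at all, $\ord(q\bmod p)$ is even for every prime $p\mid n$, whence $\ord(q\bmod m_k)$ is even for every $k\neq 0$; thus each $G$-orbit splits into the two distinct $H$-orbits $Hk$ and $2Hk$, and the disjointness of $B^*$ and $2B^*$ forces both that $B^*$ contains no two $H$-orbits from a single $G$-orbit and that $G_k\subseteq H$ for $k\in B^*$. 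Hence $B^*$ is a complete $2^2$-block.

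The computation of $B^{**}$ is the heart of the matter, and where I expect to have to be careful. I would use Fourier inversion over $\F$: since $n$ is odd, $1/n=1$ in $\F_4$, so the indicator $\chi_B$ of $B$ satisfies $\chi_B(l)=\sum_{j=0}^{n-1}\widehat B(j)\zeta^{jl}$. Isolating the term $j=0$, where $\widehat B(0)=|B|=\frac{n-1}{2}\bmod 2$, substituting $\widehat B(j)=\lambda$ on $B^*$ and $\widehat B(j)=\lambda^2$ on $2B^*$, and using $\sum_{j\neq 0}\zeta^{jl}=1$ together with $\lambda+\lambda^2=1$, everything collapses for $l\neq 0$ to $\chi_B(l)=\widehat B(0)+\lambda^2+\widehat{B^*}(-l)$. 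Now I split on $n\bmod 4$, which is precisely what governs $\widehat B(0)$. If $n\equiv 1$ then $\widehat B(0)=0$, so $\chi_B(l)=1\Leftrightarrow \widehat{B^*}(-l)=\lambda\Leftrightarrow -l\in B^{**}$, giving $B^{**}=-B$. If $n\equiv 3$ then $\widehat B(0)=1$, and the same bookkeeping gives $\widehat{B^*}(-l)=\lambda\Leftrightarrow l\in 2B$, so $B^{**}=-2B$. Keeping the $j=0$ term and the sign conventions straight through this casework is the one delicate step.

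Finally, the idempotent description follows by feeding the above into (\ref{idempotentegeneral}). For a complete code $C_{g_D}$ one has $\dim C=(n+1)/2$ and, by the first paragraph, $\widehat D(j)$ never equals $1$, so $B_1=\emptyset$ and the idempotent is $\frac{n+1}{2}+\lambda\sum_{i\in D^*}x^i+\lambda^2\sum_{i\in 2D^*}x^i$; writing $B=D^*$ and $B'=2D^*$ this is exactly (\ref{generalform}) with $B$ a complete $2^2$-block. To obtain the word \emph{exactly}, I would observe that $D\mapsto D^*$ is a bijection of the set of complete $2^2$-blocks onto itself: its square is $B\mapsto -B$ or $B\mapsto -2B$, both invertible since multiplication by a unit of $\Z/n\Z$ commutes with the $G$- and $H$-actions and so preserves the block and completeness conditions. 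Hence as $D$ ranges over all complete blocks, $B=D^*$ does too, which yields the stated characterization.
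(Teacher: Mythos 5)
Your proof is correct and rests on the same two pillars as the paper's: the identity $\widehat B(j)+\widehat B(j)^2=\widehat B(j)+\widehat B(2j)=\sum_{k=1}^{n-1}\zeta^{-kj}=1$, which forces $\widehat B(j)\in\{\lambda,\lambda^2\}$ and hence gives both the completeness of $B^*$ and the shape of the idempotent, and the evaluation of the exponential sum $\widehat{B^*}(-l)$ to identify $B^{**}$. The one step you genuinely do differently is the computation of $B^{**}$: the paper evaluates the idempotent $e(x)=\frac{n+1}{2}+\lambda S_{B^*}(x)+\lambda^2 S_{2B^*}(x)$ only at the roots $\zeta^i$ with $i\in B$, where it vanishes, which yields just the inclusion $-B\subseteq B^{**}$ (resp.\ $-2B\subseteq B^{**}$) and must then be upgraded to an equality by comparing the cardinalities of the two complete blocks; your Fourier inversion of $\chi_B$ produces the biconditional $l\in B\Leftrightarrow -l\in B^{**}$ (resp.\ $l\in 2B\Leftrightarrow -l\in B^{**}$) in one stroke, so no counting argument is needed. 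The two computations are in fact literally the same in disguise, since $e(\zeta^l)=1+\chi_B(l)$ in characteristic $2$, but your packaging is slightly cleaner. Likewise, your closing observation that $D\mapsto D^*$ is a bijection of the set of complete $2^2$-blocks onto itself (because its square is multiplication by a unit of $\Z/n\Z$) is a more explicit justification of the word \emph{exactly} than the paper's remark that (\ref{generalform}) is the idempotent of $C_{-B^*}$ or $C_{-2B^*}$; the two amount to the same thing.
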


\begin{proof}
We already know  that $B^*$ is a block. We show now that it is complete. Let $1\leq j\leq n-1$. Then
$$\sum_{i\in B}\zeta^{-ij}+(\sum_{i\in B}\zeta^{-ij})^2=\sum_{i\in B}\zeta^{-ij}+\sum_{i\in 2B}\zeta^{-ij}=\sum_{i=1}^{n-1}(\zeta^{-j})^i=1,$$
so $\sum_{i\in B}\zeta^{-ij}=\lambda$ or $\lambda^2$. Hence $j\in B^*\cup 2B^*$, that is $B^*$ is complete and, by (\ref{idempotentegeneral}),
\begin{equation}\label{idem}e(x)=\frac{n+1}{2}+\lambda\sum_{j\in B^*}x^j+\lambda^2\sum_{j\in 2B^*}x^j\end{equation}
is the idempotent polynomial of $C_B$.

Next we show that $B^{**}=-B$ or $-2B$. As $B^*$ is a complete block we can apply the result we have just proved to conclude the completeness  of    $B^{**}$,
which means that for all $1\leq i\leq n-1$, $\sum_{j\in B^*} (\zeta^j)^{-i}=\lambda$ or $\lambda^2$,  or, writing $i$ for $-i$, $\sum_{j\in B^*}
(\zeta^j)^i=\lambda$ or $\lambda^2$. If $i\in B$, $\zeta^i$ is a root of the idempotent (\ref{idem}), so evaluating in $x=\zeta^i$ we get
$$\lambda\sum_{j\in B^*}(\zeta^i)^j+\Big(\lambda\sum_{j\in B^*}(\zeta^i)^j\Big)^2=\frac{n+1}{2}=1 \text{ or } 0 ,$$
according to $n\equiv 1$ or $3\pmod 4$, respectively.
We conclude that $\sum_{j\in B^*}(\zeta^i)^j$ equals $\lambda$ in the former case and    $\lambda^2$ in the latter, that is   $-i\in B^{**}$ or $-2i\in B^{**}$.
So $-B\subseteq B^{**}$ or $-2B\subseteq B^{**}$. But both sets $B$ and $B^{**}$ have the same number of elements, being both of them complete, so these
inclusions are in fact equalities. This  proves of course that (\ref{generalform}) is the idempotent polynomial of the Galois supplemented code   $C_{-B^*}$ or
$C_{-2B^*}$.
\end{proof}

The last theorem shows that  Galois supplemented complete codes over $\F_4$ are $Q$-codes in the sense of \cite{Pless}.

\subsection{Automorphisms}

Let $C$ be a cyclic code of length $n$ with generator polynomial $g$ and roots  $\zeta^i$, $i\in B\subseteq\Z/n\Z$ ($\zeta$ is a primitive $n$th root of unity).
If $K$ is the setwise stabilizer of $B$ under the action of  $(\Z/n\Z)^\times$, any affine transformation $x\mapsto \pi_{a,b}(x)=ax+b$,   $a\in K$,
$b\in\Z/n\Z$,  defines naturally an automorphism of the code $C$. Indeed if $(c_0,\dots ,c_{n-1})\in C$, $i\in B$ and $a'\in K$ is the inverse of $a$ in
$(\Z/n\Z)^\times$,
$$\sum_{j=0}^{n-1} c_{aj}(\zeta^i)^j=
\sum_{j=0}^{n-1}c_j(\zeta^i)^{a'j}=
\sum_{j=0}^{n-1}c_j(\zeta^{a'i})^j=0.$$
In particular, as $\pi_{-1,1}$ corresponds to the reversion map $(c_0,c_1,\dots ,c_{n-1})\mapsto (c_{n-1},\dots ,c_1,c_0)$, it follows that $C$ is reversible
(i.~e., invariant under this map) if and only if $-B=B$. Reversibility is an important property for the applications to DNA-codes that we'll explain later.

\begin{theorem} Let $C=C_B$ be a Galois supplemented code of length $n$ over $\F_{q^r}$ and $K$ the stabilizer of $B$ in $(\Z/n\Z)^\times$. Then the group of
affine transformations $\pi_{a,b}$ with $a\in K$, $b\in \Z/n\Z$, is contained in the group of automorphisms of $C$. In particular $C$  is reversible if and only
if $-B=B$. If $n$ is an odd prime power, then  $C$ is reversible if the order of $q$ modulo $n$ is a multiple of $2r$ and this condition is also necessary if
the order of $q$ modulo $n$ is even.
\end{theorem}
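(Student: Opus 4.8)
The first two assertions have essentially been established in the discussion preceding the statement, so the plan is to record them briefly and then concentrate on the prime-power case. For the containment of the affine group in $\Aut(C)$, I would note that a shift $\pi_{1,b}$ is an automorphism of any cyclic code and that the computation carried out just before the theorem shows that $\pi_{a,0}$ with $a\in K$ preserves the vanishing of codewords at the roots $\zeta^i$, $i\in B$; composing the two gives every $\pi_{a,b}$ with $a\in K$. For the reversibility criterion I would identify reversion with $\pi_{-1,1}$: if $-B=B$ then $-1\in K$ and the first part applies, while conversely reversion sends the root $\zeta^i$ to $\zeta^{-i}$, so the reversed code is $C_{-B}$, which equals $C$ only when $-B=B$. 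This reduces everything to deciding when $-1\in K$.

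For the last statement I would exploit that $n$ is an odd prime power, so that $(\Z/n\Z)^\times$ is cyclic and therefore contains a unique element of order two, namely $-1$. Write $s=\ord(q\mod n)$, $G=\langle q\rangle$ and $H=\langle q^r\rangle$; by Theorem \ref{principal2} the set $B$ is a block for $G$ whose stabilizer in $G$ is $H$, so that $K\cap G=H$ while $H\subseteq K$ always. The key observation is that whenever $s$ is even the unique involution $q^{s/2}$ of the cyclic group $G$ must coincide with the unique involution $-1$ of the ambient cyclic group, so that $-1=q^{s/2}\in G$.

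With this in hand the two directions are short. If $2r\mid s$ then $r\mid s/2$, hence $q^{s/2}$ lies in $\langle q^r\rangle=H\subseteq K$; thus $-1\in K$, that is $-B=B$, and $C$ is reversible. Conversely, if $C$ is reversible and $s$ is even, then $-1=q^{s/2}\in K\cap G=H=\langle q^r\rangle$, which forces $s/2$ to be a multiple of $r$ modulo $s$, i.e. $r\mid s/2$ and $2r\mid s$. The one point needing care, and the only real obstacle, is justifying that the relevant involution is literally $-1$ and that it lies in $G$: this is exactly where the hypothesis that $n$ is an odd prime power enters, through the cyclicity of $(\Z/n\Z)^\times$, and it also explains why the necessity half must assume $\ord(q\mod n)$ even, since otherwise $-1\notin G$ and the link between reversibility and divisibility by $r$ is lost.
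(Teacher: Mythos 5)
Your proof is correct and follows essentially the same route as the paper: both arguments rest on the cyclicity of $(\Z/n\Z)^\times$ (so that $-1$ is its unique involution, forcing $-1=q^{s/2}\in G$ when $s=\ord(q\bmod n)$ is even) and on the fact that the stabilizer of $B$ in $G$ is exactly $H=\langle q^r\rangle$, the paper merely packaging the computation as the equivalence of ``$2r\mid s$'', ``$\ord(q^r\bmod n)$ even'' and ``$-1\in H$''. The only point you leave tacit, as does the paper, is that the step from $q^{s/2}\in H$ to $r\mid s/2$ uses the standing fact that $r\mid s$ (guaranteed by Theorem~\ref{principal2} since a Galois supplemented code exists).
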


\begin{proof}
Only the second part has to be proved. If $n$ is an odd prime power the group $(\Z/n\Z)^\times$ is cyclic, so the three conditions
 $2r\mid\ord (q\mod n)$; $\ord(q^r\mod n)$ is even and   $-1$ lies in $H$, the subgroup generated by $q^r$, are equivalent. So, if this is the case, as $B$ is
 invariant by $H$, $B$ is fixed by $-1$, that is, $C$ is reversible. If $\ord(q\mod n)$ is even, $-1\in G$, the subgroup generated by $q$, so if $C$ is
 reversible, $-1\in H$ and $\ord (q \mod n)$ is a multiple of $2r$.
\end{proof}

In general, for fixed $q$, $r$ and $n$ with $r\mid \ord (q \mod n)$, there exist Galois supplemented reversible codes if and only if $-1\not\in G\backslash H$.
Necessity is clear. Sufficiency is also obvious if $-1\in H$. Finally, if    $-1\not\in G$,   we consider a set $T$ of representatives of the action of $G$ on
$\Z/n\Z\backslash\{0\}$ such that $T=-T$. We can take for $B$ the union of the $H$-orbits of the elements of any $T'\subseteq T$ such that $T'=-T'$. Notice that
$-1\not\in G\backslash H$ when the order of $q$ modulo $n$ is odd or when it is a multiple of $2r$ (because in this last case $H$ has even order, so $-1$ is in
$G$ if and only if it is in $H$).

\subsection{Duality}

Given a code $C\subseteq\F_q^n$, we keep, as in Section 3, the notation $C^\perp$ and $\widetilde C$ for the dual and extended code of $C$, respectively.

\begin{lemma}\label{dual} Suppose $\F_q$ is a field of characteristic $2$ and  $x^n-1=(x-1)g(x)h(x)$. We denote by $C$ and $C'$ the cyclic codes with generator
polynomials $g$ and $h^R$, respectively. Then $\widetilde C^\perp=\widetilde C'$ (for $\widetilde C^\perp$, extend first).
\end{lemma}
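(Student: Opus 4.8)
The plan is to prove the equality by establishing the inclusion $\widetilde{C'}\subseteq(\widetilde C)^\perp$ and checking that the two codes have the same dimension. First I would record the dimensions. Since $\F_q$ has characteristic $2$ and $(n,q)=1$, the integer $n$ is odd; writing $\deg g+\deg h=n-1$, the code $C=C_g$ has dimension $n-\deg g$, so $(\widetilde C)^\perp$ has dimension $(n+1)-(n-\deg g)=\deg g+1$, which is exactly $\dim C'=n-\deg h^R=\deg g+1=\dim\widetilde{C'}$. Hence it suffices to prove the orthogonality $\widetilde{C'}\subseteq(\widetilde C)^\perp$.

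The key algebraic input is the reciprocal-check-polynomial description of duals together with a self-reciprocity identity. Taking reciprocals in $x^n-1=(x-1)g(x)h(x)$ and using that in characteristic $2$ one has $(x^n-1)^R=x^n-1$ and $(x-1)^R=x-1$, I obtain $gh=g^Rh^R$. Consequently the check polynomial of $C$ is $(x^n-1)/g=(x-1)h$, so $C^\perp=C_{(x-1)h^R}$, which is precisely the even-weight subcode $(C')^+$ of $C'=C_{h^R}$ (recall that $(C')^+=C_{(x-1)h^R}$ since $h^R(1)\neq0$, as $x^n-1$ is separable). The hypotheses are symmetric under interchanging $C$ and $C'$: rewriting the factorization as $x^n-1=(x-1)h^Rg^R$ and relabeling $h^R,g^R$ as the new $g,h$ turns $C'$ into the new $C$ and $C$ into the new $C'$, so the same computation yields $(C')^\perp=C^+$, the even-weight subcode of $C$.

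Next I would use the all-ones word $\mathbf 1$. Since $(x^n-1)/(x-1)=gh=g^Rh^R$ is divisible by both $g$ and $h^R$, the vector $\mathbf 1$ (corresponding to $1+x+\cdots+x^{n-1}$) lies in both $C$ and $C'$; moreover its parity is $n\cdot1=1\neq0$ because $n$ is odd, so $\mathbf 1\notin C^+$ and $\mathbf 1\notin(C')^+$, giving the splittings $C=C^+\oplus\langle\mathbf 1\rangle$ and $C'=(C')^+\oplus\langle\mathbf 1\rangle$. Thus $\widetilde C$ is spanned by the extensions $\widetilde u$ with $u\in C^+$ together with $\widetilde{\mathbf 1}$, and likewise $\widetilde{C'}$ by the $\widetilde w$ with $w\in(C')^+$ together with $\widetilde{\mathbf 1}$; note that extending an even-weight word appends the coordinate $0$. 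Checking orthogonality on these generators is then routine: for $u\in C^+$ and $w\in(C')^+$ the appended coordinates vanish and $u\cdot w=0$ because $C^+=(C')^\perp$; the pairing of $\widetilde u$ (or $\widetilde w$) with $\widetilde{\mathbf 1}$ equals $\sum u_i=0$ (resp. $\sum w_i=0$); and finally $\langle\widetilde{\mathbf 1},\widetilde{\mathbf 1}\rangle=(n+1)\cdot1=0$ since $n+1$ is even. Hence every generator of $\widetilde{C'}$ is orthogonal to every generator of $\widetilde C$, so $\widetilde{C'}\subseteq(\widetilde C)^\perp$, and equality follows from the dimension count.

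I expect the main obstacle to be the treatment of the odd-type words. The even-weight subcodes pair off cleanly via the dual relations $C^\perp=(C')^+$ and $(C')^\perp=C^+$, but the single remaining degree of freedom on each side is carried by $\mathbf 1$, and the whole statement ultimately hinges on the scalar $\langle\widetilde{\mathbf 1},\widetilde{\mathbf 1}\rangle$ vanishing, which happens exactly because characteristic $2$ and $(n,q)=1$ force $n$ odd and hence $n+1$ even. Getting the reciprocal and monic-normalization bookkeeping right, so that $C^\perp$, $(C')^+$, and the analogous identifications are the \emph{same} cyclic code rather than merely isomorphic ones, is the other point that will require care.
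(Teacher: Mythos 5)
Your proof is correct, but it takes a genuinely different route from the paper's. The paper argues head-on: identifying $\F_q^n$ with polynomials of degree $<n$, it observes that the standard inner product of $f_1$ and $f_2$ is the constant term of $f_1(x)f_2(1/x)$, and then computes directly for the spanning vectors $x^ig(x)$ of $C$ and $x^jh^R(x)$ of $C'$ that $(x^ig(x))(x^{-j}h^R(1/x))=x^{-(r-i+j)}(1+x+\cdots+x^{n-1})$, so each such inner product equals $1=g(1)h^R(1)$ and the extended generators are orthogonal in one uniform calculation; it then concludes by the same dimension count you use. You instead invoke the (standard, but external to the paper) description of the dual of a cyclic code via the reciprocal of its check polynomial, deduce $C^\perp=(C')^+$ and $(C')^\perp=C^+$ from $gh=g^Rh^R$, and split off the all-ones word via $C=C^+\oplus\langle\mathbf 1\rangle$, reducing the orthogonality check to three cases. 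Your version makes the structural mechanism transparent --- the unextended codes fail to be dual exactly in the $\mathbf 1$ direction, and appending the parity coordinate repairs this precisely because $n+1$ is even in characteristic $2$ --- whereas the paper's version is more self-contained, needing no prior facts about duals of cyclic codes. All your intermediate claims check out: $(x-1)^R=x-1$ and $(x^n-1)^R=x^n-1$ in characteristic $2$, $h^R(1)=h(1)\neq 0$ by separability, and the extension map is linear so the generators you list do span $\widetilde C$ and $\widetilde{C'}$. (Incidentally, the paper states the final inclusion as $\widetilde C^\perp\subseteq\widetilde C'$ where its generator computation actually gives $\widetilde C'\subseteq\widetilde C^\perp$, the same direction you prove; the dimension count makes the discrepancy harmless.)
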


\begin{proof}
The codes $\widetilde C^\perp$ and $\widetilde C'$ have the same dimensions since
\begin{eqnarray*}
\dim \widetilde C^\perp &=& n+1-\dim\widetilde C=n+1-\dim C=1+\deg g=n-\deg h\\ &=&n-\deg h^R=\dim C'=\dim\widetilde C'.\\ \end{eqnarray*}
Under the natural identification of $\F_q^n$  and the space of polynomials with degree less than $n$, $\PP_{n-1}$, the standard inner product of two polynomials
$f_1,f_2\in \PP_{n-1}$, $(f_1,f_2)$, becomes the independent term of the Laurent polynomial $f_1(x)f_2(1/x)$. Now, for $0\leq i\leq r$, $0\leq j\leq n-r+1$,
$r=\deg h$:
$$(x^ig(x))(x^{-j}h^R(1/x))=x^{-(r-i+j)}g(x)h(x)=x^{-(r-i+j)}(1+x+\cdots +x^{n-1}),$$
$0\leq r-i+j\leq n-1$, so $(x^ig(x),x^jh^R(x))=1=g(1)h(1)=g(1)h^R(1)$ (to evaluate $g(1)h(1)$ notice that $n$ is odd and the characteristic of the field is
$2$). This means that the two extended codewords  $(x^ig(x), g(1))\in\widetilde C$ and  $(x^jh^R(x), h^R(1))\in\widetilde C'$ are orthogonal, that is
$\widetilde C^\perp\subseteq \widetilde C'$. As the two spaces have the same dimensions, equality follows.
\end{proof}

\begin{corollary} Let $q$ be a power of $2$ and $B\subseteq\Z/n\Z$ a  complete $q^2$-block. Then $\widetilde C_B^\perp=\widetilde C_{-2B}$. In particular
$\widetilde C_B$ is self-dual if and only if $B=-2B$. If $n$ is a prime power,  this condition holds if and only if the order of   $q$ modulo $n$ is congruent
to $2$ modulo $4$.\end{corollary}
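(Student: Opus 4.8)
The plan is to derive the whole statement from Lemma \ref{dual}, which computes the Euclidean dual of an extended cyclic code in characteristic $2$ as the extension of a reciprocal code. Since $q$ is a power of $2$ and $(q,n)=1$, the length $n$ is odd and $\F_{q^2}$ has characteristic $2$, so the lemma is available. First I would turn the combinatorial data of $B$ into a factorization: as $B$ is a complete $q^2$-block we have $r=2$, and completeness (condition (ii)) says that $\Z/n\Z\setminus\{0\}$ is the disjoint union of $B$ and $qB$. Translating to polynomials gives $x^n-1=(x-1)g_B(x)g_{qB}(x)$, so I take $g=g_B$ and $h=g_{qB}$ in Lemma \ref{dual} and obtain $\widetilde C_B^{\,\perp}=\widetilde C_{h^R}$. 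It then remains to read off the block of $h^R$: the roots of $h=g_{qB}$ are the $\zeta^k$ with $k\in qB$, so the roots of $h^R$ are the inverses $\zeta^{-k}$, i.e. those indexed by $-qB$. Hence $\widetilde C_B^{\,\perp}=\widetilde C_{-qB}$, which for $q=2$ is exactly the asserted $\widetilde C_{-2B}$ (and $q=2$ is the case used in the $\F_4/\F_2$ application).

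For the self-duality statement I would use that the chain $B\mapsto g_B\mapsto C_B\mapsto\widetilde C_B$ is injective, since distinct blocks produce distinct generator polynomials, hence distinct cyclic codes, and puncturing the parity coordinate recovers $C_B$ from $\widetilde C_B$. Thus $\widetilde C_B=\widetilde C_B^{\,\perp}=\widetilde C_{-qB}$ holds precisely when $B=-qB$.

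It remains to characterize $B=-qB$ for $n$ an odd prime power, where $(\Z/n\Z)^\times$ is cyclic. Because a complete $q^2$-block exists, $\ord(q\bmod p)$ is even for the prime $p\mid n$, so $d:=\ord(q\bmod n)$ is even and $-1\in G=\langle q\rangle$; in particular $-q\in G$. The key reduction is that $\mathrm{stab}(B)\cap G=H=\langle q^2\rangle$: one inclusion is clear, and if $g\in G\setminus H$ then $g\in qH$, so $gB=qB$ is the complement of $B$ and $\neq B$. Hence $B=-qB$ iff $-q\in H$. Writing the unique involution of the cyclic group $G$ as $-1=q^{d/2}$, we get $-q=q^{d/2+1}$, and $q^{j}\in H$ exactly when $j$ is even, so $-q\in H$ iff $d/2+1$ is even iff $d\equiv 2\pmod4$. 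I expect the only genuinely non-formal step to be this identification $\mathrm{stab}(B)\cap G=H$: everything hinges on completeness forcing $qB$ to be the complement of $B$, which is precisely what rules out any element of $qH$ stabilizing $B$.
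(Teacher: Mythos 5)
Your proof is correct and follows essentially the same route as the paper: Lemma \ref{dual} applied to the factorization $x^n-1=(x-1)g_B(x)g_{qB}(x)$ supplied by completeness, followed by the computation in the cyclic group $(\Z/n\Z)^\times$ using $|G:H|=2$ (your explicit exponent count $-q=q^{d/2+1}$ versus the paper's parity-of-$|H|$ argument is only a cosmetic difference). Your observation that the argument naturally yields $\widetilde C_B^\perp=\widetilde C_{-qB}$, matching the stated $-2B$ only when $q=2$, is well taken---the paper's own proof writes $g_B^\sigma=g_{2B}$ and so tacitly makes the same identification.
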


\begin{proof}
By the hypotheses $x^n-1=(x-1)g_B(x)g_B^\sigma(x)$ so, by Lemma \ref{dual}, $\widetilde C_B^\perp$ is the extended code of the cyclic code with generator
polynomial the reciprocal polynomial of  $g_B^\sigma=g_{2B}$, which is   $g_{-2B}$. The first part of the corollary follows immediately. If $n$ is an (odd)
prime power, $(\Z/n\Z)^\times$ is cyclic and, as $|G:H|=2$, $-1\in G$  (we keep the usual notation for $G$ and $H$). But  $2\not\in H$, so $-2\in H$ if and only
if $-1\not\in H$. This is equivalent to $H$ having odd order, which in turn is the same as the order of $G$ being congruent to $2$ modulo $4$. So if this
condition holds it is clear that $B=-2B$. On the other hand, if it doesn't,   $-1\in H$ and certainly  $-2B=2B\not=B$.
\end{proof}

\section{Applications to DNA-codes}

Recall that $A_4^{GC,RC}(n,d)$ is the maximum number of words in a DNA-code $C$ of length $n$ with constant $GC$-content such that the minimum distance is $\geq
d$ and satisfies the $d$-reverse-complement constraint condition, $d(u,v^{RC})\geq d$ for all $u,v\in C$. Of course, if $C$ is any  code over $\F_4$ with odd
length $n$, then $v\not=v^{RC}$ for any codeword $v\in C$ (because reversion and complementation add one the central coordinate), so we can partition $C$ into
two  subsets $C_1$ and $C_2$ of equal size such that $u\not= v^{RC}$ for all $u,v\in C_1$ or $C_2$. If $C$ is reversible and complemented with minimum distance
$\geq d$,
it is clear that any of the subcodes $C_1$ or $C_2$ satisfies the $d$-reverse-complement constraint and its minimum distance is certainly $\geq d$. We can take
as $C$ a Galois supplemented reversible code over $\F_4$ that, according to Theorem \ref{principal}, has exactly $2^{n-2\deg g}{n \choose {n+1\over 2}}$ words
of $\F_2$-weight $(n+1)/2$. Thus the code $C_1$ (or $C_2$) would have exactly $2^{n-2\deg g-1}{n \choose {n+1\over 2}}$ words of $\F_2$-weight $(n+1)/2$.

Notice that in the construction above, the code  $C_1$ (or $C_2$) cannot be a linear code  over $\F_4$ (though, with some care, one can choose it to be
$\F_2$-linear, namely take any $\F_2$-hyperplane of $C$ not containing $(1,\dots,1)$). It could be desirable to find our code of constant GC-content inside a
genuine linear subcode of $C$ (but of course,  keeping it as large as the one we already have). It turns out that we can do it by considering $C^+$: since $n$
is odd, $C=C^+ \oplus \langle (1,\dots ,1)\rangle$, so for any $v\in C^+$, $v^{RC}\not\in C^+$, whence $v\not=v^{RC}$  and the $d$-reverse-complement constraint
$d(u,v^{RC})\geq d$ is valid for all $u,v\in C^+$.  By Corollary \ref{enumeratorplus}, $C^+$ contains $2^{n-2\deg g-1}{n \choose {n+1\over 2}}$ words of
$\F_2$-weight $(n+1)/2$ or $(n-1)/2$ (whichever of the two that is odd; notice that Corollary \ref{enumeratorplus} says that for odd $\F_2$-weight, half of the
words of $C$ are in $C^+$ and the other half in $C\backslash C^+$, but for even $\F_2$-weight, all of them are in $C\backslash C^+$), so we finally get a code
of the same size as before.

Any of the two strategies lead then to the following result.

\begin{theorem} Suppose that there exists a Galois supplemented reversible code $C\subseteq \F_4^n$ of minimum distance $d$. Then
$$A_4^{GC,RC}(n,d)\geq  {n \choose {n+1\over 2}}.$$
\end{theorem}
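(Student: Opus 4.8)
The plan is to turn the two constructions sketched just before the statement into a formal argument, the payoff of each being a code of the required constant $GC$-content whose size exceeds ${n\choose {n+1\over 2}}$. First I would record the structural facts that drive everything. Since $n$ is coprime with $q=2$ the length $n$ is odd, so ${n+1\over 2}$ is an integer and, because reversion fixes the central coordinate while complementation adds $1$ to it, $v\neq v^{RC}$ for every $v\in\F_4^n$. As $g$ is Galois coprime we have $g(1)\neq0$, hence the all-ones word $\mathbf 1$ lies in $C$; together with reversibility and the identity $v^{RC}=v^R+\mathbf 1$ (here $v^R$ is the reversal of $v$) this shows that $C$ is closed under reverse-complement and that $v\mapsto v^{RC}$ is a fixed-point-free involution of $C$. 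The key compatibility is that this involution preserves $\F_2$-weight: adding $1$ maps $\F_2$ into $\F_2$ and $\F_4\setminus\F_2$ into itself (in characteristic $2$, $\omega+1=\omega^2$), while reversion only permutes coordinates, so $\wt_{\F_2}(v^{RC})=\wt_{\F_2}(v)$.

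Next I would combine this with the counting results. By Theorem \ref{principal}, with $q=r=2$, the $\F_2$-weight enumerator of $C$ is $2^{n-2\deg g}(X+Y)^n$, so $C$ has exactly $2^{n-2\deg g}{n\choose {n+1\over 2}}$ words of $\F_2$-weight ${n+1\over 2}$. Restricting the reverse-complement involution to this weight class (allowed by the weight-preservation above) gives a fixed-point-free involution on a set which is therefore of even size, and I would choose one representative from each pair $\{v,v^{RC}\}$ to form $C_1$ of size $2^{n-2\deg g-1}{n\choose {n+1\over 2}}$. All words of $C_1$ share the $\F_2$-weight ${n+1\over 2}$; the minimum-distance condition is inherited from $C$; and for $u,v\in C_1$ one has $u\neq v^{RC}$ (otherwise $u$ and $v$ would be the two members of a single pair, contradicting the choice of one representative), so $d(u,v^{RC})\geq d$ follows from the minimum distance of $C$. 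Alternatively I would run the argument inside the even-weight subcode $C^+$: since $n$ is odd, $\mathbf 1\notin C^+$, so $v^{RC}=v^R+\mathbf 1$ lands in the coset $C\setminus C^+$, making $u\neq v^{RC}$ automatic for all $u,v\in C^+$; Corollary \ref{enumeratorplus} then supplies exactly $2^{n-2\deg g-1}{n\choose {n+1\over 2}}$ words of the odd weight among ${n-1\over 2},{n+1\over 2}$, and this subset is the desired constant-$GC$-content code.

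Finally I would close the numerical gap. Either construction yields an admissible code of size $2^{n-2\deg g-1}{n\choose {n+1\over 2}}$, so it remains to check $n-2\deg g-1\geq0$. This holds because $g$ and $g^\sigma$ are coprime divisors of $x^n-1$, whence $gg^\sigma\mid x^n-1$ and $2\deg g\leq n$; as $n$ is odd this gives $2\deg g\leq n-1$, i.e. $\deg g\leq{n-1\over2}$, so $2^{n-2\deg g-1}\geq1$. Plugging into the definition $A_4^{GC,RC}(n,d)=\max_wA_4^{GC,RC}(n,d,w)$ then yields $A_4^{GC,RC}(n,d)\geq{n\choose {n+1\over 2}}$. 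I expect the only delicate point to be the bookkeeping for the reverse-complement constraint in the first construction---verifying that selecting one word per $\{v,v^{RC}\}$ pair really forces $u\neq v^{RC}$ throughout, and that this selection stays compatible with fixing the $\F_2$-weight; the $C^+$ variant is attractive precisely because the coset argument makes $u\neq v^{RC}$ immediate and additionally produces an honestly $\F_2$-linear competitor.
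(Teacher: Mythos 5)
Your proposal is correct and follows essentially the same two constructions the paper gives immediately before the theorem: splitting the weight-$(n+1)/2$ class of $C$ along the fixed-point-free reverse-complement involution, and alternatively working inside the even-weight subcode $C^+$ via Corollary \ref{enumeratorplus}. You merely make explicit two points the paper leaves implicit (that $v\mapsto v^{RC}$ preserves $\F_2$-weight, and that $n-2\deg g-1\geq 0$ since $gg^\sigma\mid x^n-1$ and $n$ is odd), both of which check out.
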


Now, using for instance the package GUAVA of the computer system GAP \cite{Guava} or Magma \cite{Magma}, we can improve several of the bounds for
$A_4^{GC,RC}(n,d)$ given in \cite{Aboluion}. For $n=17$ we can take the $2^2$-block $B=\{2,8,9,15\}$. The corresponding Galois supplemented code has minimum
distance $4$, so the constructions explained in the two previous paragraphs show that $A_4^{GC,RC}(17,4)\geq 2^8{17 \choose 8}=6223360$ (the bound given in
\cite{Aboluion} is $3111120$). Similarly, if we take $B=\{2,6,7,8,9,10,11,15\}$,  we get that $A_4^{GC,RC}(17,7)\geq {17 \choose 8}=24310$ (\cite{Aboluion}
gives $12120$). Notice that the Galois supplemented code $C_B$, while having the same dimension and distribution of $\F_2$-weights as the corresponding QR-code,
has bigger minimum distance ($7$ and $5$, respectively).
Also, considering the QR-codes for $n=13$ and $29$, we obtain the improved bounds $A_4^{GC,RC}(13,5)\geq 1716$ and $A_4^{GC,RC}(29,11)\geq 77558760$.





\bibliographystyle{elsarticle-num}
\bibliography{<your-bib-database>}



\end{document}